\DeclareMathAlphabet\mathbfcal{OMS}{cmsy}{b}{n}
\newcommand{\nat}{\mathbb{N}}
\newcommand{\rea}{\mathbb{R}}
\newcommand{\E}{\mathbfcal{E}}
\newcommand{\G}{\mathbfcal{G}}
\newcommand{\V}{\mathbfcal{V}}
\newcommand{\bnot}{\sim}
\newcommand{\band}{\wedge}
\newcommand{\bor}{\vee}
\newcommand{\bimplies}{\rightarrow}
\newcommand{\biff}{\leftrightarrow}
\newtheorem{theorem}{\textbf{Theorem}}[section]
\newtheorem{propo}[theorem]{\textbf{Proposition}}
\newtheorem{definition}[theorem]{\textbf{Definition}}
\newcommand{\argmin}{\mathop{\operatorname{argmin}}}
\let\oldr@@t\r@@t
\def\r@@t#1#2{%
\setbox0=\hbox{$\oldr@@t#1{#2\,}$}\dimen0=\ht0
\advance\dimen0-0.2\ht0
\setbox2=\hbox{\vrule height\ht0 depth -\dimen0}%
{\box0\lower0.4pt\box2}}
\LetLtxMacro{\oldsqrt}{\sqrt}
\renewcommand*{\sqrt}[2][\ ]{\oldsqrt[#1]{#2} }
\begin{document}

\title{Multi-Agent Coordination of Thermostatically Controlled Loads by Smart Power Sockets for Electric Demand Side Management}

\author{Mauro~Franceschelli, Alessandro~Pilloni, Andrea Gasparri
\thanks{\scriptsize The research leading to these results has received funding from  the Italian Ministry of Research and Education (MIUR) with project ``CoNetDomeSys", code  RBSI14OF6H, under call SIR 2014.
}
\thanks{\scriptsize M.~Franceschelli is with the Department of Electrical and Electronic Engineering (DIEE), University of Cagliari, Cagliari, Italy. Email: mauro.franceschelli@diee.unica.it. Mauro Franceschelli is the corresponding author.

 A.~Pilloni is with the Department of Electrical and Electronic Engineering (DIEE), University of Cagliari, Cagliari, Italy. Email: alessandro.pilloni@diee.unica.it.

A.~Gasparri, is with the Department of Engineering, University of Roma Tre, Rome, Italy. Email: gasparri@dia.uniroma3.it.
}
}


%

\maketitle

\begin{abstract}
This paper presents a multi-agent control architecture and an online optimization method based on dynamic average consensus to coordinate the power consumption of a large population of Thermostatically Controlled Loads (TCLs). Our objective is to penalize peaks of power demand, smooth the load profile and enable Demand Side Management (DSM). The proposed architecture and methods exploit only local measurements of power consumption via Smart Power Sockets (SPSs) with no access to their internal temperature. No centralized aggregator of information is exploited and agents preserve their privacy by cooperating anonymously only through consensus-based distributed estimation, robust to node/link failure. The interactions among devices are designed to occur through an unstructured peer-to-peer (P2P) network over the internet. 

The architecture includes novel methods for parameter identification, state estimation and mixed logical modelling of TCLs and SPSs. It is designed from a multi-agent and plug-and-play perspective in which existing  household appliances can interact with each other in an urban environment.

Finally, a novel low cost testbed is proposed along with  numerical tests and an experimental validation.

\end{abstract}

\begin{IEEEkeywords}
Multi-agent systems, demand side management, thermostatically controlled loads, online optimization, dynamic consensus, anonymous agents, distributed predictive control.
\end{IEEEkeywords}

\IEEEpeerreviewmaketitle

\section{Introduction} \label{sect1}

\IEEEPARstart{M}{atching} power generation and consumption is the fundamental problem of the power grid \cite{el2004optimal}, this issue is widely known to be made worse by volatile renewable power generation and hourly variations of urban power demand due to the time-correlation of the usage of domestic electric appliances for water heating, air conditioning, cooking etc. This problem may be ameliorated by increasing the flexibility of electric power demand to reduce short term urban electric load variations. To do so, a future vision of the power system that leverages Information and Communications Technology (ICT) to implement advanced control strategies to improve the flexibility and the efficiency of the DSM strategy is required ~\cite{deng2015survey,sabbah2014survey}.

About $40\%$ of total electric power demand in the US in 2017 was due to residential power consumption according to US Department of Energy \cite{DoE}. According to the same source, the largest share of residential power consumption is due to electric heating and cooling usually achieved by the so called Thermostatically Controlled Loads (TCLs) such as water heaters, freezers, boilers, radiators, and air conditioners.  This data varies greatly depending on the country, city and cost and availability of electricity with respect to natural gas or other fossil fuels. These appliances are controlled by a thermostat and therefore are characterized by a simple ON/OFF power consumption dynamics which can be modulated to act as energy storage devices to provide ancillary services to the smart grid, see \cite{Hao2015189,Tindemans15,Grammatico2015,franceschelli2016coordination,ecc18}. Thus, large populations of actively controlled TCLs can effectively add some flexibility to modulate the urban electric power demand.

Some promising methods for electric Demand Side Management (DSM) that focus on controlling TCLs can be found in \cite{Braslavsky13,Xing20145439,Grammatico2015}. There, centralized strategies and distributed decision making methods supported by a centralized information aggregator are exploited to address the problem.

In the electric demand side management community the keyword "distributed control" and "multi-agent" is associated to distributed decision making by smart homes or devices which interact with a centralized aggregator of information which collects data and updates reference set-points or price signals. In fact, real-time pricing strategies could be seen as a form of distributed decision making where a feedback loop on the system is closed by the energy provider that measures the power demand and broadcasts a unique time-varying price signal. This kind of control architecture may present issues with the privacy of the users or be vulnerable to DoS cyber-attacks on the information aggregator.

In this paper we propose a different framework where smart devices (agents), TCLs in particular, cooperate within a peer-to-peer network autonomously and anonymously with a small set of neighboring agents with the network graph. 
The proposed method aims to exploit only local and asynchronous anonymous interactions among the agents to optimize trough their emergent behavior a global objective function defined by their power consumption. In particular, we chose an objective function which incentivises the shaving of peak power consumption and reduction of electric load variations by the network of TCLs. Through the proposed framework, the power consumption of the network can be modulated, thus enabling the shaping of the electric load profile without any direct control action on any device or the sharing of power consumption information with a centralized coordinator.

The proposed method is paired with a multi-agent control architecture which is intended to exploit the cheapest possible hardware to enable cooperation among devices, i.e., a smart power socket, suitable to retrofit existing TCLs such as domestic water heaters, thus greatly reducing the cost of the infrastructure needed for the electric demand side management (DSM) program and therefore significantly reducing the cost of adoption by the users. This design choice makes the control problem more challenging because on each device the identification of the TCL dynamics, the estimation of its state on its control has to be carried out with only power consumption measurements and ON/OFF control capability.  For instance, in \cite{ParIdenti2012} the problem of identifying the dynamics of the TCL was addressed by either considering its internal temperature measurable or at least assuming that the temperature range of the thermostat (maximum and minimum temperature) was known in absolute terms. In this paper, we only exploit power consumption measurement to develop a real plug-and-play approach which does not require any system configuration by the user.

Furthermore, since we assume each agent in the network as being representative of a SPS connected to a TCL, we consider in fact agents modeled by a hybrid systems. The multi-agent coordination problem that we formulate in this paper consists in the online optimization (or cooperative model predictive control) of an objective function which is not separable and subject to non-convex local constraints on each agent due its hybrid dynamics. Nonetheless, despite the difficulty of the considered problem we provide a heuristic approach which converges up to a local minimum of the considered objective. Experimental observations and results indicate that the amount of modelling uncertainty and disturbances that affect each single system in the large scale network considered is significant and therefore there is little advantage on computing optimal control solutions (even if there were a method to compute them with our current working assumption) with respect to heuristic solutions.

\textbf{Literature review:}
In \cite{Braslavsky13} the authors propose a feedback control scheme for TCLs (Air conditioners), as opposed to open-loop and model-free strategies, where broadcasts of thermostat set-points offset changes to the ACs. This scheme requires readings from a  common power distribution connection point where total aggregate demand is measured, furthermore the authors assume to be able to change the reference temperature of local thermostat. Among many other differences with our work, we consider the local temperature reference of the TCL to be unknown and not accessible, in order to enable a cheap retrofit of existing devices by exploiting only SPSs.

In \cite{Xing20145439}, the authors propose a distributed algorithm to control a network of thermostatically controlled loads (TCLs) to match, in real time, the aggregated power consumption of a population of TCLs with the predicted power supply. The algorithm is a consensus-based optimization of an objective function which represents the sum of temperature differences with respect to a desired reference. In their work, the authors assume that each device runs an instance of a consensus algorithm to estimate the average power consumption in the network  at each iteration. Based on this information and on a centralized forecast of power consumption, the desired power consumption is assigned to each device. Among many other differences, in our work we deal with actual devices which are not supposed to be tailored for the DSM tasks, and therefore algorithms of the kind of \cite{Xing20145439} can not be implemented without ad-hoc hardware. Furthermore, we consider an online optimization process which is based on local asynchronous optimizations and is robust to changes in the network during execution. 

In \cite{Grammatico2015} an elegant approach based on mean-field control theory is proposed for the control of large populations of TCLs. The authors consider a quadratic cost objective  function of the temperature profiles of the TCLs to be optimized. The approach considers local decision making by the TCLs and a centralized aggregator of information that computes the average state values to be used as feedback and proves convergence of the strategy to a fixed point by exploiting the theory of contraction mappings and mean fields. In \cite{Grammatico20163315} the same authors generalize and extend the approach to consider also local convex constraints and several additional applications. In our scenario, apart from considering a different objective for the network of TCLs, the internal temperature of the TCL is not accessible and the TCLs are controlled by SPS, thus their dynamics is hybrid and is modeled by mixed integer local constraints. Furthermore we also deal with the issues of identification and estimation required to make our approach implementable in a real scenario with our proposed experimental testbed.   

In \cite{hadjicostis6426665} the problem of optimally dispatching a set of energy resources is considered. The authors formulate a convex optimization problem and propose  the so-called \emph{ratio-consensus} algorithm to enable the distributed decision making process to occur in an unbalanced directed graph.

In \cite{notarnicola2016duality,Notarnicola8472154} a distributed optimization method to solve min-max problems characterized by an objective function which is not separable and with local convex constraints is presented. The considered optimization problem is motivated by the peak-to-average ratio minimization problem in a network of TCLs. In oue setting, we consider a different objective function, non-convex local constraints due the modeling of the TCLs and SPSs and different control variables.



Finally, our method and architecture exploits a dynamic average consensus algorithm to enable the distributed estimation of the average planned power consumption of the whole network of TCLs. The dynamic consensus problem is an agreement problem, or consensus problem, where the agents aim to agree on the current average value of their inputs or reference signals, as opposed to their initial state value as commonly considered in the literature. The reader is referred to \cite{SolmazTutorial,spanos2005dynamic,Zhu2010322,Kia2015112,Cortes2015,FraGas2016,Franceschelli201969} for a comprehensive overview on the dynamic average consensus problem. 


Summarizing, the \textbf{main contributions} of this paper are:

\begin{itemize}
    \item A multi-agent DSM control architecture for the coordination of anonymous networks of  thermostatically controlled loads via smart power sockets;
    \item A method for power consumption model identification for TCLs based only on power consumption measurements;
    \item An hybrid observer for the estimation of the TCL internal state based only on power consumption measurements;
    \item A distributed and randomized online optimization method for the cooperative constrained optimization of the power consumption by the network of TCLs controlled by smart power sockets;
    \item A low-cost experimental testbed based on off-the-shelf hardware and purpuse-built software;
    \item An experimental validation of the proposed method.
\end{itemize}

\textbf{Advantages of our multi-agent control architecture:}

\begin{itemize}
\item \textbf{User privacy:} there is no information aggregator or centralized supervisor, information is exchanged only locally by small sets of anonymous users which possibly change over time.

\item \textbf{Plug-and-play architecture:} the proposed architecture is based on smart power sockets which seeminglessly identify and observe the dynamics of the TCL they are connected to.

\item \textbf{Retrofit on existing devices:} smart power sockets are intended to coordinate existing devices, thus greatly reducing the cost of the DSM infrastructure by allowing users to keep their existing domestic appliances.

\item \textbf{Randomized and asynchronous coordination:} The proposed method exploits randomized local updates of the ON/OFF state of SPSs, thus the method can be implemented in large-scale networks without the need for network wide time-synchronization of state updates.

\item \textbf{Robustness to practical implementation issues:} The proposed method exploits a dynamic consensus algorithm for the distributed estimation of the average planned power consumption by the network and randomized local state updates. Therefore, it inherits the robustness properties of dynamic consensus algorithms versus time-varying topologies, communication failures, agent failure, time-delays, changes in the network size. Randomization of state updates prevents the failure of a single agent to disrupt the emergent behavior of the whole network.

\item \textbf{Robustness against Denial of Service (DoS) cyber-attacks:} Since the architecture is based on the anonymous local interactions among devices in an unstructured peer-to-peer network, there is no single computing element that can be attacked to disrupt the network behavior. To deny cooperation among the devices a large-fraction of the large-scale network of devices needs to be attacked, thus increasing significantly the cost and scale of DoS attacks on the network. Furthermore, even disconnecting the network does not prevent the connected components to continue their cooperation to optimize their behavior.
\end{itemize}

\textbf{Structure of the paper:} In Section \ref{sect:modeling} an approximate dynamical model of the generic TCL and SPS device is presented together with the adopted notation. In Section \ref{sect:pr_statement} the  global objective function optimized by the proposed method is presented. In Section \ref{sect:architecture} the Multi-agent Control Architecture is presented, in its subsections the proposed methods for system identification, state estimation, Mixed Logic Dynamical (MLD) modelling of the TCL-plus-SPS system and dynamic consensus algorithms for distribtued estimation are described in detail. In Section \ref{subsect:heuristic} the proposed online distributed optimization method is presented and some of its convergence properties are characterized.
In Section \ref{sect:simulation} a novel low-cost experimental testbed is presented and the proposed online optimization method is validated both numerically and experimentally with real domestic TCLs. Finally, in Section \ref{sec:conclusion} concluding remarks and possible future improvements are discussed.


\section{Modelling of Thermostatically Controlled Loads and Smart Power Sockets}
\label{sect:modeling}

Consider a multi-agent system (MAS) consisting of a population $\mathbfcal{V}=\lbrace 1,\dots,n\rbrace$ of TCLs whose power outlet is plugged into an off-the-shelf SPS adapter. SPSs are provided with a sensor to measure power consumption, processing capability, WiFi communication, and an ON/OFF power switch for either autonomous or remote actuation purposes. Each SPS is connected to a peer-to-peer network over the internet.
Let $\mathbfcal{E}(t_k)\subseteq \left\{\mathbfcal{V}\times\mathbfcal{V}\right\}$ be the set of active communication links among agents, i.e., TCL-plus-SPS systems, at time $t_k$, and let graph $\mathbfcal{G}=(\mathbfcal{V},\mathbfcal{E})$ represent the peer-to-peer network topology. According to \cite{Hao2015189,perfumo2012load}, the dynamics of the $i$-th TCL can be well approximated at discrete time instants $t_1<t_2<\ldots<t_k$, with $k\in \nat$, as follows
\begin{equation}\small
\begin{array}{ccl}
  T^{i}(t_{k+1})&=&T^{i}(t_k)\cdot e^{-\alpha^i \Delta \tau}+\\
&&  \left(1-e^{-\alpha^i \Delta \tau}\right)\left(T_\infty^i+\frac{q^i}{\alpha^i}u^i(t_k)+w(t_k)\right),
\end{array}
  \label{eqn:dyn_TCL}
\end{equation}
where: $\Delta \tau=t_{k+1}-t_k$ is a constant sampling interval; \mbox{$T^{i}(t_k)\in \rea^+$} is the thermostatically controlled  temperature of the TCL; $\alpha^i>0$ denotes the heat exchange coefficient with the operating environment, whose temperature is $T^i_{\infty}$; \mbox{$u^i:\rea^+ \rightarrow\left\{0,1\right\}$} is the $\lbrace \mathrm{OFF},\mathrm{ON}\rbrace$ control signal; $q^i>0$ is the heat generated by the electric heating element. Finally, $w:\rea^+\rightarrow \rea$ models unknown disturbances, e.g., if we consider an electric water heater, it may represent the temperature drop due to the refill process with cold water of the tank after a water drawing event generated by the user. Let $s^i(t_k):\rea^+ \rightarrow \left\{0,1\right\}$ be the binary state associated with the SPS's power switch; since the SPS is connected in series with the TCL's power outlet, follows that $u^i$ can be rewritten as
\begin{equation} \small
u^i(t_k)=s^i(t_k)\cdot h^i(t_{k}),
\label{eqn:ui}
\end{equation}
where $h^i:\rea^+ \rightarrow\left\{0,1\right\}$ is the state of TCL's heater element whose value is updated according to the next thermostatic control rule
\begin{equation} \small h^i(t_{k+1}) : 
\left\{\begin{array}{ccc}
0 & 
\text{if} & T^i(t_k)\geq T_{\mathrm{max}}^i\\
h^i(t_{k}) &
\text{if} & T^i(t_k)\in \left(T_{\mathrm{min}}^i,T_{\mathrm{max}}^i\right)\\
1 & \text{if} & T^i(t_k)\leq T_{\mathrm{min}}^i
\end{array}\right.\label{eqn:u_hyst}
\end{equation}
such that the TCL's temperature remains within the operating range, i.e. $T^i\in\left[T_{\mathrm{min}}^i, T_{\mathrm{max}}^i\right]$. From \eqref{eqn:ui} follows that necessary condition to have $u^i=1$ is that $s^i=1$, but not vice versa.

Let $\mathrm{p}^i>0$ be the nominal amount of power absorbed by the $i$-th TCL in the ON state ($u^i=1$). Since its power consumption is approximately either $0$ or $\mathrm{p}^i$ watts, in the reminder we model the absorbed power profile $p^i(t_k)$ sensed by the SPS, as follows
\begin{equation}\small
p^i(t_k)=  \mathrm{p}^i\cdot u^i(t_k)=\mathrm{p}^i\cdot \left(s^i(t_k)\cdot h^i(t_{k})\right). 
\label{eqn:pi}
\end{equation}

Thus, the  global  instantaneous  power consumption by the network at time $t_k$ is given by
\begin{equation} \small
P(t_k)=\sum_{i\in\mathbfcal{V}} p^i(t_k)=\sum_{i\in\mathbfcal{V}} \mathrm{p}^i\cdot u^i(t_k)=\sum_{i\in\mathbfcal{V}} \mathrm{p}^i\cdot \left(s^i(t_k)\cdot h^i(t_{k})\right)
\label{eqn:Total_Power}
\end{equation}


\section{Coordination Objective for the Multi-Agent System (MAS)}\label{sect:pr_statement}

In this paper we formulate our TCL-plus-SPS control problem described in the introduction as a multi-agent cooperative control problem where the objective is to optimize a globally coupled quadratic cost function over a receding horizon time window which penalizes the peaks of power consumption. Agents are hybrid systems, thus the constraints on their dynamics are local and non-convex. To preserve privacy 
each agent has only access to information on the state of a small set of anonymous neighboring agents, and the timing and order of their state updates is randomized. The P2P network that connects the agents is unstructured and modeled by a connected graph, which is unknown to the agents and possibly time-varying. Cooperation occurs by the distributed estimation of average planned power consumption over the network thorough a dynamic consensus algorithm. The network may be affected by communication failures, changes in topology and size as long as the chosen dynamic consensus algorithm provides bounded tracking error despite these features. 


Let us now formalize the coordination objective of our the problem.
According to \eqref{eqn:dyn_TCL}-\eqref{eqn:pi}, each agent, at time $t_k$, plans its own power consumption within a receding horizon time-window $\mathcal{\tau}(t_k)=\left[t_{k}+ \Delta \tau,t_k+L \Delta \tau\right]$ of length $L \Delta \tau$, which is a schedule vector of $L$ entries, i.e.,
\begin{equation}
\mathbf{s}^i(t_k)=\left[s^i_{1}\cdots s_\ell^i\cdots s_L^i\right]^\intercal =
\left[s^i(t_{k+1})\cdots s^i(t_{k+L})\right]^\intercal
\in\lbrace0,1\rbrace^L  
\label{eqn:s_vector}
\end{equation}
each entry denotes the state that should be actuated by the $i$-th SPS's power switch at the future instants of time $t_{k+\ell}=t_k+\ell \Delta \tau$, $\ell=1,\dots,L$. 
Clearly, \eqref{eqn:s_vector} provides $L$ degrees of freedom to each agent to satisfy its own local constraints on its dynamics while optimizing a global objective function.

In accordance with \eqref{eqn:Total_Power} and the notation in \eqref{eqn:s_vector}, we denote the prediction at time $t_k$ of the planned power consumption by the network at a future time $t_{k+\ell}=t_k+\ell \Delta \tau$ as
\begin{equation}
\small
P_\ell(t_k)=\sum_{i\in\mathbfcal{V}} \mathrm{p}^i \cdot u^i_{\ell}(t_k)=\sum_{i\in\mathbfcal{V}} \mathrm{p}^i\cdot\left( s^i_{\ell}(t_k)h^i_{\ell}(t_k)\right).\label{eqn:Pell} 
\end{equation}


The MAS aims to optimize online through local interactions the next global objective function
\begin{equation}\small \label{eqn:globalObjective}
J(t_k)=\frac{1}{L}\sum_{\ell=1}^{L} \left(P_\ell(t_k)\right)^2=\frac{1}{L}\sum_{\ell=1}^{L} \left(\sum_{i\in \mathbfcal{V}} \mathrm{p}_i \cdot u^i_{\ell}(t_k)\right)^2.
\end{equation}

Notice that, since \eqref{eqn:globalObjective} is a sum over the receding horizon time-window of the squared power consumption expected by the TCLs, it penalizes the peaks power consumption over the window $\tau(t_k)=\left[t_k,t_k+L \Delta \tau\right)$ of length $L \Delta \tau$. Since the total power consumption of the network is squared at each interval, we further point out that $J(t_k)$ is not separable and since the total power consumption of the network is unknown to the agents, the current value of the objective function in \eqref{eqn:globalObjective} is unknown to the agents too.

An intuitive and qualitative way to interpret the objective~\eqref{eqn:globalObjective} is to consider a scenario where an energy provider changes the prices of electricity in an urban area proportionally to the current expected power consumption in a short term horizon in real time, to smooth and shave off peaks in the power demand. If we denote the cost, predicted at time $t_k$, for the electricity at the future time $t_{k+\ell}$ as $c_{\ell}(t_k)=c \cdot P_\ell(t_k)$, $c>0$, then the objective \eqref{eqn:globalObjective} can be reduced 
to 
\begin{equation}\small
J(t_k)=\frac{1}{L} \sum_{\ell=1}^{L} c_{\ell}(t_k) P_\ell(t_k).
\label{eqn:J_tot_example}
\end{equation}
Notice that, since each TCL abides to an embedded constraint to keep its internal temperature at a given threshold, the averaged power demand by the network is not allowed to change significantly. Therefore, by the optimization of function~\eqref{eqn:globalObjective} it is expected, as a byproduct, a reduction in the Peak-to-Average Ratio (PAR) of the power consumption in the network, a metric which is widely accepted as indicating how efficiently the power grid can serve a generic electric load profile \cite{mohsenian2010autonomous,notarnicola2016duality}.

To shape indirectly the electric load profile generated by the MAS to shave the peak load of the corresponding grid network, we introduce "virtual agents" (or virtual loads) which interact with other agents to add a fictitious, possibly large, power consumption to the network at the desired time, forcing the MAS to cooperatively shift the power consumption of each agent away from the considered peak hours. This operation, as opposed to other common approaches to DSM, keeps the agents anonymous, no direct commands are issued by a centralized entity to switch off single systems and the local constraints of operation of each TCL are satisfied, i.e., no TCL is forced to lower its temperature lower than its desired threshold.


Thus said, at each time $t_k$, the generic agent $i$ actuates the plan of its SPS according to the value of $s^i_{1}(t_{k})$ of \eqref{eqn:s_vector}, then it computes a new plan $\mathbf{s}^i$, where from now on we omit the dependence of the planning vector from time $t_k$ for the sake of readability and with no loss of generality. 


In addition, it should be noticed that the objective given in \eqref{eqn:globalObjective} is not a separable function, thus its distributed optimization is more challenging.  Furthermore, it should also be noticed that every agent is subject to a set of local constraints, denoted as $\mathcal{\chi}^i(t_k)$, which are generally non-convex as they involve mixed integer linear constraints (as  it will be discussed in detail in Section \ref{subsect:mld}). This implies that the considered optimization is NP-hard in general. Accordingly, in this paper we propose an heuristic to approximate the optimal solution of the considered problem in a real scenario.

\section{Proposed Multi-agent Control Architecture}\label{sect:architecture}

In this section, we describe the proposed multi-agent control architecture enabled by Smart Power Sockets which add measurement, processing and communication capabilities to traditional domestic TCLs such as electric water heater or electric radiators. In particular, we provide methods for system identification, state observation and Mixed Logic Dynamical (MLD) modeling of the agents representing the combination of the dynamics of a TCL actuated by a SPS.

\subsection{Method for TCL Power Consumption Model Identification\label{subsubsect:Sys_Id}}

A simple method to identify the TCL system parameters would be to measure the internal and external temperature of the device and, by knowing the desired temperature range $[T_{min},T_{max}]$, use any of the the standard system identification methods to approximate the dynamics as a first order system, see \cite{shad2017identification} for the specific case of TCL parameter identification. However, since in our plug-and-play scenario we do not have access to the internal (or external) temperature of the TCL, and the desired temperature range $[T_{min},T_{max}]$ is considered unknown, these methods can not be applied. 


Furthermore, by exploiting only power consumption measurements it is not possible to estimate directly neither the absolute temperature $T^i(t_k)$ controlled by the TCL nor  the maximum and minimum desired temperature $T^i_{\mathrm{min}}$, $T^i_{\mathrm{max}}$ or the ambient temperature $T^i_\infty$. However, by observing the instants of time in which power consumption turns on and off, in accordance with~\eqref{eqn:u_hyst}, the falling (resp. raising) edges on $p^i(t_k)$ in \eqref{eqn:pi} informs us that the temperature $T^i(t_k)$ reached the value of $T^i_{\mathrm{max}}$ (resp. $T^i_{\mathrm{min}}$).
 
Since our objective is to approximate the dynamics of the power consumption of the TCL and not its temperature, by performing a coordinate change in the TCL model we define a so-called ``virtual temperature'', denoted as $y^i(t_k)$,  as follows
\begin{equation}\small
    y^i(t_k)=\beta^i\cdot \left(T^i(t_k)-T^i_\infty\right),
    \label{eqn:yi_change_of_variable}
\end{equation}
where  $\beta^i>0$ is an unknown parameter that  linearly maps $T^i(t_k)\in\left[T_\infty^i,T_{\mathrm{max}}^i\right]$ to $y^i(t_k)\in\left[y_\infty^i=0,y_{\mathrm{max}}^i=1\right]$.

In our system identification procedure we consider the TCL as operating in retention mode, i.e., with no significant temperature drops due to exogenous disturbances, i.e. $w^i\equiv 0$ in \eqref{eqn:dyn_TCL}, and $s^i(t_k)=1$ $\forall$ $t_k\geq0$. By substituting \eqref{eqn:yi_change_of_variable} into \eqref{eqn:dyn_TCL}, with $w^i\equiv0$, it yields our power consumption model
\begin{equation} \small
y^i(t_{k+1})=y^i(t_k) \cdot e^{-\alpha^i \Delta \tau}+\frac{\beta^i q^i}{\alpha^i}\left(1-e^{-\alpha^i \Delta \tau}\right) u^i(t_k).
\label{eqn:dyn_y}
\end{equation}

Let us now define ${\Delta T}^i_{\mathrm{off}}$ (resp. ${\Delta T}^i_{\mathrm{on}}$) as the discharge (resp. charge) interval of time to be waited in order to bring $T^i$ from $T^i_{\mathrm{max}}$ (resp. $T^i_{\mathrm{min}}$) to $T^i_{\mathrm{min}}$ (resp. $T^i_{\mathrm{max}}$) when $w^i(t_k)=0$, $\forall$ $t_k\geq0$.

If $T^i(t_k)=T^i_{\mathrm{max}}$ (i.e. \mbox{$y^i(t_k)=1$}) at $t_k=0$, from \eqref{eqn:dyn_y} it results
\begin{equation}\small
\small y^i(\Delta T^i_{\mathrm{off}})=y^i_{\mathrm{max}}\cdot e^{-\alpha^i \Delta T^i_{\mathrm{off}}}=y^i_{\mathrm{min}}\rightarrow \alpha^i=\frac{-ln\left(y^i_{\mathrm{min}}\right)}{\Delta T^i_{\mathrm{off}}}.
\label{eqn:alpha_i}
\end{equation}
On the contrary, if $T^i(t_k)=T^i_{\mathrm{min}}$ at $t_k=0$, from \eqref{eqn:u_hyst} it yields
\begin{equation}
\small y^i(\Delta T^i_{\mathrm{on}})=y^i_{\mathrm{min}}\cdot e^{-\alpha^i  \Delta T^i_{\mathrm{on}}}+\frac{\beta^i q^i}{\alpha^i}\left(1-e^{-\alpha^i \Delta T^i_{\mathrm{on}}}\right)=y^i_{\mathrm{max}}.
\label{eqn:y_charge}
\end{equation}
Then, by substituting \eqref{eqn:alpha_i} into \eqref{eqn:y_charge}, with $y^i_{\mathrm{max}}=1$, we have
\begin{equation}\small
    \frac{\beta^i q^i}{\alpha^i}=\frac{1 - y^i_{\mathrm{min}}\cdot e^{-\alpha^i\Delta T^i_{\mathrm{on}}}}{1 - e^{-\alpha^i\Delta T^i_{\mathrm{on}}}}=\frac{1 - y^i_{\mathrm{min}}\cdot e^{ln(y^i_{\mathrm{min}})\frac{\Delta T^i_{\mathrm{on}}}{\Delta T^i_{\mathrm{off}}}}}{1 - e^{ln(y^i_{\mathrm{min}})\frac{\Delta T^i_{\mathrm{on}}}{\Delta T^i_{\mathrm{off}}}}}
    \label{eqn:K_i}
\end{equation}

The intervals of time  $\Delta T^i_{\mathrm{off}}$ and $\Delta T^i_{\mathrm{on}}$ can be easily measured by the SPS power measurements.  On the other hand, parameter $y^i_{\mathrm{min}}$ is neither known nor available from measurements, and the solution of \eqref{eqn:alpha_i}, \eqref{eqn:K_i} is not unique. However, $y_{\mathrm{min}}$ can be tuned experimentally within the open domain $(y^i_{\infty},y^i_{\mathrm{max}})\equiv(0,1)$ {should we reverse $y^i_{\infty},y^i_{\mathrm{max}}$ here?} to get the best approximation of the TCL power consumption dynamics. In our model $y_{\mathrm{min}}$ is a free choice parameter. By its choice the other parameters of our model are computed. In our numerical and experimental validation we choose $y_{\mathrm{min}}=0.5$ and identify the other parameters in the model accordingly. 



To properly evaluate $\Delta T^i_{\mathrm{off}}$ and $\Delta T^i_{\mathrm{on}}$ from the SPS's power measurements, due to the time-varying behaviour of the ambient temperature $T^i_{\infty}$ during the day, as well due to unmodeled dynamics or disturbances, a single charge-discharge cycle is not sufficient to get a statistically robust information on these quantities. Experimental tests indicate that a period of time including four or five charge-discharge cycles is enough to obtain a good statistical information. In this regard, on the top of Figure~\ref{fig:identification} a ten hours power consumption profile from a real electric water heater acquired during the night hours, where no water drawing events occur, is shown. In particular, let $\Delta T^i_{\mathrm{on},k}$ and $\Delta T^i_{\mathrm{off},k}$ be the $k$-th charge and discharge time intervals of the identification data-set of power measurement, we considered the median value of the dataset
\begin{equation}\small
\begin{bmatrix}
{\Delta T}^i_{\mathrm{on}}\\
{\Delta T}^i_{\mathrm{off}}
\end{bmatrix}=\mathrm{Median}
\begin{pmatrix}
\begin{bmatrix}\Delta T^i_{\mathrm{on},1}&\cdots& \Delta T^i_{\mathrm{on},k}& \cdots&\\
\Delta T^i_{\mathrm{on},1}&\cdots& \Delta T^i_{\mathrm{on},k}& \cdots&
\end{bmatrix}
\end{pmatrix}
    \label{eqn:charge_discharge_intervals}
\end{equation}

\begin{figure}[!t]
    \centering
    \includegraphics[width=21pc]{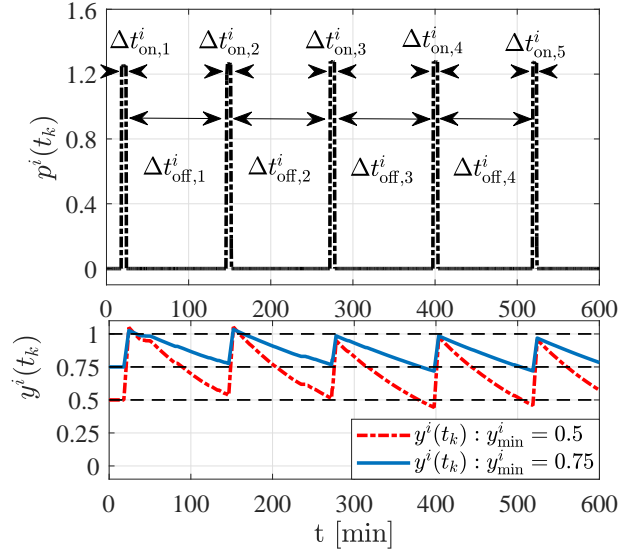}
    \caption{ Top: Power consumption of a domestic electric water heater used for experimental tests during ten hours of operation, at night. Bottom: Output of model \eqref{eqn:dyn_y} corresponding to the power consumption shown on top, with the parameters estimated with two different choices of $y^i_{\mathrm{min}}$.}
    \label{fig:identification}
\end{figure}

\begin{table*}[!tbp]
    \caption{Description of discrete events of the hybrid virtual temperature observer depicted in Figure \ref{fig:hybrid_obs}}
    \label{tab:observerEvents}
    \centering
    \begin{tabular}{|c|c|c|}
    \toprule
      \textbf{Events}   &  \textbf{Triggering conditions} & \textbf{Effects on continuous states}\\
      \midrule
       $\mathrm{E_{off}}$      &   $\left( s^i(t_{k-2})=1 \right)\ \band \ \left( s^i(t_{k-1})=1\right) \band \left( s^i(t_{k})=1\right) \band \ \left( u^i(t_{k-1})=1\right) \band \left( u^i(t_{k})=0\right)$ & $\left(y^i(t_k):=y^i_{\mathrm{max}}~,~T_r^i:=0\right)$    \\
       \midrule
       $\mathrm{E_{on}}$       &  $\left( s^i(t_{k-2})=1\right) \band \left( s^i(t_{k-1})=1 \right)\band \left( s^i(t_{k})=1 \right)\band \left( u^i(t_{k-1})=0\right) \band \left( u^i(t_{k})=1\right)$ & $\left(y^i(t_k):=y^i_{\mathrm{min}}~,~T_r^i:=0\right)$  \\
       \midrule
       Timeout & $T_r^i\geq \bar{T}^i_r$     &  \text{none}   \\
       \midrule
        Out of Bound 1 &  $y^i(t_k)< \beta_1\cdot y^i_{\mathrm{min}}$ &  $y^i(t_k):=\beta_1\cdot y^i_{\mathrm{min}}$  \\
       \midrule
        Out of Bound 2 &  $y^i(t_k)> \beta_2\cdot y^i_{\mathrm{max}}$ &  $y^i(t_k):=\beta_2\cdot y^i_{\mathrm{max}}$  \\
       \bottomrule
    \end{tabular}
\end{table*}

\begin{figure}[!tbp]
\centering
\includegraphics[width=20pc]{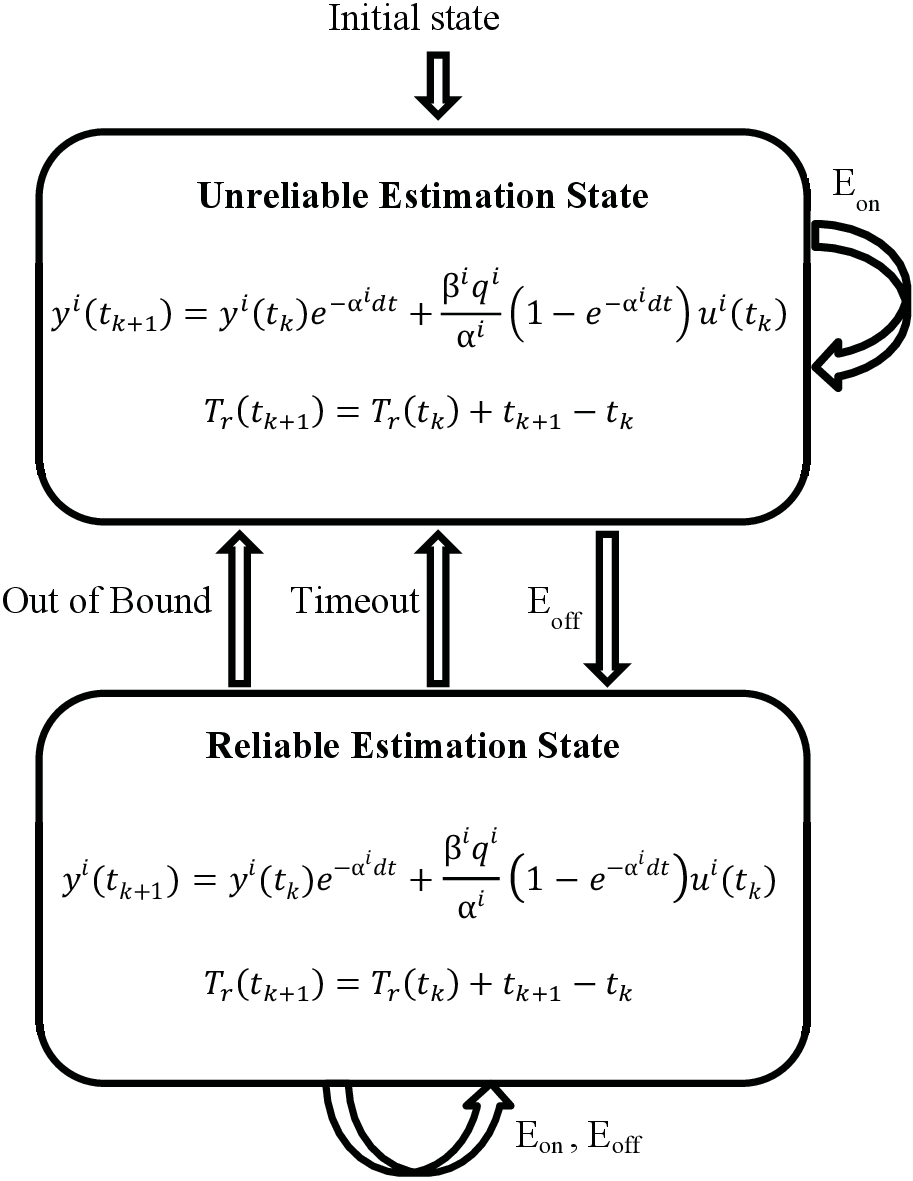}
\caption{Local hybrid virtual temperature observer.}
\label{fig:hybrid_obs}
\end{figure}

\textbf{Experimental Validation: }
On the bottom of Figure~\ref{fig:identification} it is shown the force response of model \eqref{eqn:dyn_y} with input corresponding to the measured power consumption shown on top of Figure~\ref{fig:identification}, with the parameters estimated with \eqref{eqn:charge_discharge_intervals} by considering $5$ charge/discharge cycles,\eqref{eqn:alpha_i}, \eqref{eqn:K_i}, and for two different values of $y^i_{\mathrm{min}}$. It can be seen that the method is sufficiently accurate for our purposes. It can be further noted that the choice of $y^i_{\mathrm{min}}$ does not affect the duration of the charge-discharge cycle. It follows that the estimation of the virtual temperature may be used to predict the future power consumption of the device. {\hfill $\blacksquare$}

\vspace{0.1cm}

\subsection{Hybrid Virtual Temperature Observer\label{subsubsect:Observer}}
We now consider the problem of estimating the virtual temperature $y^i(t_k)$ in \eqref{eqn:yi_change_of_variable} by exploiting only the SPS measurements. The discrete state of the power switch of the SPS is known and is denoted by $s^i(t_k)$, the state of the heater element in the TCL, $u^i(t_k)=s^i(t_k)h^i(t_{k})$ , can be evaluated by applying a threshold to the measurement on the current power consumption as follows
\begin{equation}\small
u^i(t_k) : \left\lbrace
\begin{array}{cl}
     1 & \mathrm{if}~ p^i(t_k)>\epsilon^i_{\mathrm{n}} \\
     0 & \mathrm{otherwise}
\end{array}
\right.
    \label{eqn:ui_threashold}
\end{equation}
where $\epsilon_{\mathrm{n}}^i>0$ denotes the noise measurement level on $p^i(t_k)$. 
         
The block diagram of the proposed observer is depicted in Figure~\ref{fig:hybrid_obs}. It is a hybrid system with a continuous state $y^i(t_k)$ representing the estimated virtual temperature, a timer $T_r^i(t_k)$ and two discrete states, referred as the ``Unreliable Estimation state'' (UE) and ``Reliable Estimation state'' (RE). Table~\ref{tab:observerEvents} lists the events that trigger the transition between the discrete states and their set and reset effects on $y^i(t_k)$ and $T_r^i$. The symbol $\band$ denotes the ``and'' logical connective operator.

The UE states denote the case in which no reliable prediction on $y^i(t_k)$ can be done due to the uncertainties on model~\eqref{eqn:dyn_TCL}, and the presence of exogenous unpredictable disturbances $w^i(t_k)$.The RE state denotes the case where the estimation on $y^i(t_k)$ is sufficiently accurate to be used to predict the behavior of the TCL with a small error.  At inizialization, the observer is set to the UE state.

By \eqref{eqn:ui} and \eqref{eqn:u_hyst}, whenever the TCL control logic autonomously changes state from $u^i(t_{k-1})=1$ to $u^i(t_{k})=0$, it can be inferred that its controlled temperature $T^i(t_k)$ has reached its upper limit $T^i_{\mathrm{max}}$ which corresponds to a virtual temperature of $y^i(t_k)=1$. Similarly, we can gather useful information when the control input changes from $u^i(t_{k-1})=0$ to $u^i(t_{k})=1$, i.e., $y^i(t_k)=y^i_{\mathrm{min}}$.

In accordance with Table~\ref{tab:observerEvents} we refer to these events, resp., as ``$\mathrm{E_{off}}$'' and ``$\mathrm{E_{on}}$''. It is worth to point out that, whenever these events are triggered the estimated virtual temperature is set to the corresponding correct value, thus resetting any error due to uncertainties or disturbances. We thus refer to these events also as ``synchronization events''. 


{From the UE state, whenever ``$\mathrm{E_{off}}$'' is triggered, i.e., the SPS is ON and the device autonomously switches from the ON state with $u^i(t_k)=1$ to OFF with $u^i(t_k)=0$, we are able to guarantee that the current estimation is reliable and will remain so until a ``Timeout'' or an ``Out of Bound'' event occurs.

The ``Timeout'' event is triggered whenever the clock variable $T_r^i$ measures that an interval of time greater than $\bar{T}_r^i$ has passed since the last synchronization event. The ``Timeout'' event is needed because in between synchronization events the estimation of the virtual temperature is open loop, thus we set the observer to the UE state if no synchronization events occur within a maximum time window due to an expected large estimation error. In accordance with the notation in Figure~\ref{fig:identification}, $\bar{T}_r^i$ is experimentally set equal to three times the maximum $\Delta T^i_{\mathrm{off},k}$ recorded during the TCL parameter identification procedure.

The ``Out of Bound'' event is triggered whenever the estimated virtual temperature exceeds its expected bounds by certain percentage. The occurrence of this event is an indication that either a significant temperature drop is occurred (e.g. due to the refill process with cold water of a boiler after an water drain event) or that model uncertainties were become significant that the current estimation is no more reliable.

Finally, we notice that when the event ``$\mathrm{E_{on}}$'' is triggered from the UE state, we do not consider the new value of the estimated virtual temperature to be reliable.
This is because strong exogenous disturbances, such as drawing hot water from the TCL, might have brought its controlled temperature much lower than $T^i_{min}$ and therefore the autonomous switching to the ON state does guarantee by itself that the the TCL temperature corresponds to $T^i_{min}$ in that particular instant of time.  

{\textbf{Experimental Validation:}} 
An experimental validation of the proposed hybrid observer is shown in Figure~\ref{fig:real_TCL}, the test was performed with a real domestic water heater with parameters identified in accordance with Subsection~\ref{subsubsect:Sys_Id} with $\epsilon^i_{\mathrm{n}}=5$W, $\beta_1=0.8$ and $\beta_2=1.2$ and affected by disturbances in the form of water drawing events and model uncertainties. In the test the SPS was always in the ON state. We can observe that the virtual temperature estimation is accurate, and thus reliable, from the first falling edge on $u^i(t_k)$ up to $\approx210$min, and from $350$min to the end of the test, since a sufficient number of synchronization events ``$\mathrm{E_{on}}$'' and ``$\mathrm{E_{on}}$'' occurs and we do not see large errors in the estimated temperature profile. On the contrary, at around $210$min the observer detects an ``Out of Bound 2'' event, due to the drawing of hot water from the water heater. Because of that, the observer sets itself to the UE state while saturating $y^i(t_k)$ at its maximum admissible value. Then, at $250$min, a ``$\mathrm{E_{off}}$'' event is triggered and the observer sets itself  back to the RE state. From $\approx250$ min to $\approx350$ min,  the observer is on the RE state, in this case estimation errors seems higher since $u^i(t_k)$ switches ON earlier than the expected time ($\approx\Delta T^i_{\mathrm{off}}$). This is due to cold water refilling the tank and heat exchange in fluids occurring by convection currents \cite{kakacc1991boilers}, not modeled in neither \eqref{eqn:dyn_TCL} nor in \eqref{eqn:dyn_y}. However, despite model uncertainties, since the synchronization events reset $y^i(t_k)$ to the correct value, the estimation error is kept small and bounded in an actual real scenario. 

Finally, in Figure \ref{fig:real_TCL2} it is shown the evolution of the observer state during the execution of the TCL cooperation protocol for a real TCL during the experimental test described in Section \ref{sect:simulation}. It can be seen that, despite model uncertainties and disturbances, the occurrence of synchronization events keeps estimation errors bounded.  
{\hfill $\blacksquare$}}

\begin{figure}[!t]
\centering
\includegraphics[width=20pc]{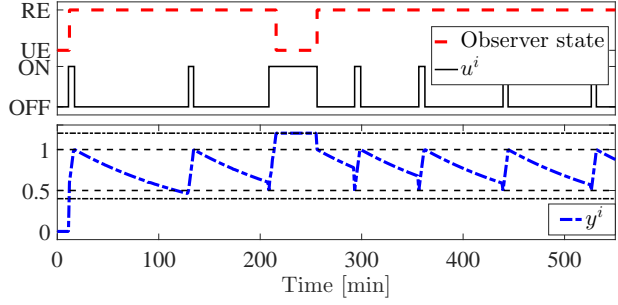}
\caption{Experimental validation of the hybrid observer with a real water heater. Top: evolution of the discrete state of the observer and the TCL heating element. Bottom: evolution of estimated virtual temperature.}
\label{fig:real_TCL}
\end{figure}
\begin{figure*}[!t]
\centering
\includegraphics[width=42pc]{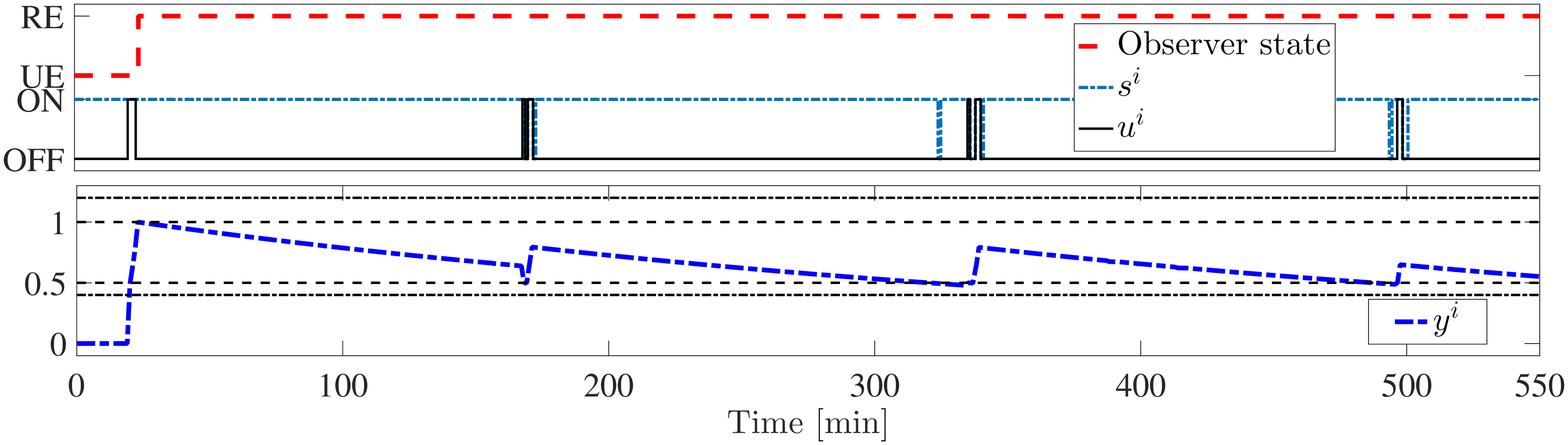}
\caption{Experimental validation of the hybrid observer with a real water heater. Top: evolution of the discrete state of the observer and the TCL heating element and SPS state during the execution of the TCL cooperation protocol. Bottom: evolution of estimated virtual temperature.}
\label{fig:real_TCL2}
\end{figure*}


\subsection{Mixed Logic Dynamical modeling of TCL-plus-SPS agents}\label{subsect:mld}

{To enable the cooperative optimization among networked TCL-plus-SPS agents,
we model the local hybrid dynamic \eqref{eqn:dyn_TCL}-\eqref{eqn:u_hyst} as a Mixed Logic Dynamical (MLD) system, see \cite{bemporad1999control} for an overview on the modelling methodology. 
Briefly, MLD exploits propositional calculus to identify a set of linear integer constraints that characterize the dynamical behavior of a hybrid system and enables the implementation of model predictive control by linear programming. 

We now translate the operative constraints \eqref{eqn:dyn_TCL}-\eqref{eqn:u_hyst} of the generic agent $i$ at time $t_k$, into a set of mixed integer linear constraints denoted as $\mathcal{\chi}^i(t_k)$.

A model of the normalized virtual temperature ${y^i(t_k)\in[0,1]}$, as defined in Section~\ref{subsubsect:Sys_Id}, is exploited instead of the physical temperature $T^i(t_k)\in[T_{\infty}^i, T_{\mathrm{max}}]$ because the considered SPS can measure only the electric power absorbed by the TCL while its internal temperature can not be measured. 

However, it is worth to remark that if $T^i(t_k)$ were available, then the following MLD characterization of the TCL-plus-SPS system would still hold without changes, by simply considering $T^i(t_k)$ in place of $y^i(t_k)$.

Thus, let us consider the dynamics of the virtual temperature \eqref{eqn:dyn_y} of the $i$-th TCL, and let \mbox{$A_i= e^{-\alpha^i\Delta \tau}$} and \mbox{$B_i=\frac{\beta^i q^i}{\alpha^i}(1-e^{-\alpha^i\Delta \tau})$}.} Similarly, following the notation in \eqref{eqn:s_vector}, { let us further define}
\begin{equation}\label{eqn:u_vector}
{\small
\begin{array}{ll}
\mathbf{u}^i(t_k)&=\left[u^i_{1}\cdots u_\ell^i\cdots u_L^i\right]^\intercal  \\
&=\left[u^i(t_{k+1}) \cdots u^i(t_{k+\ell}) \cdots u^i(t_{k+L})\right]^\intercal,\\
\end{array}
}
\end{equation}
\begin{equation}\label{eqn:s_vector}
{\small
\begin{array}{ll}
\mathbf{y}^i(t_k)&=\left[y^i_{1}\cdots y_\ell^i\cdots y_L^i\right]^\intercal  \\
&=\left[y^i(t_{k+1})\cdots y^i(t_{k+\ell}) \cdots y^i(t_{k+L})\right]^\intercal, 
\end{array}
}
\end{equation}
whose entries denote the control input and the virtual temperature over the prediction horizon from time $t_{k+1}$ to $t_{k+L}$.

From \eqref{eqn:dyn_y} it yields $y^i_1={A}_i y^i(t_k)+{B}_i u^i(t_k)$, thus the predicted profile of the virtual temperature from time $t_{k+2}$ to time $t_{k+L}$, evaluated at time $t_k$ as function of $\mathbf{u}^i(t_k)$, is
\begin{equation}\label{eqn:y_update_big}
\renewcommand*{\arraycolsep}{1pt}
\small
  {
  \begin{bmatrix} \small
    y_{2}^i \\
    y_{3}^i\\
    \vdots \\
    y_{L}^i
  \end{bmatrix}
  }
  =
  \underbrace{
  \begin{bmatrix} \small
    {B}_i &  0 & \cdots & 0 \\
    {A}_i{B}_i & {B}_i & \cdots & 0 \\
    \vdots & & \ddots\\
    {A}^{L-{ 2}}_i{B}_i  & {A}^{L-{3}}_i{B}_i  & \cdots & {B}_i
  \end{bmatrix}
  }_{    \mathbf{F}^i}
  \underbrace{
  \begin{bmatrix} \small
    u_{1}^i \\
    u_{2}^i\\
    \vdots \\
    u_{L}^i
  \end{bmatrix}
  }_{\mathbf{u}^i(t_k)}
  +\underbrace{
  \begin{bmatrix} \small
    {A}_i \\
    {A}^{2}_i\\
    \vdots \\
    {A}^{L{ -1}}_i
  \end{bmatrix}
  }_{\mathbf{G}^i} {\small y^{i}_1}.
\end{equation}


For completeness sake, by noting that $y^i_{1}$ is fully determined by $u^i(t_k)$ and $y^i(t_k)$ which are not decision variables in the interval $\left[t_k,t_k+\Delta \tau\right)$, according to \eqref{eqn:y_update_big}, it holds
{
\begin{equation}\small
    \mathbf{y}^i(t_k)=
    \begin{bmatrix}
    0 & \bm{0}\\
    \bm{0} & \mathbf{F}^i
    \end{bmatrix}\begin{bmatrix}
     0\\
     \mathbf{u}^i(t_k)
    \end{bmatrix}+
    \begin{bmatrix}
     1\\
    \mathbf{G}^i
    \end{bmatrix}({B}_i u^i(t_k)+ {A}_i y^i(t_k)).
\label{eq:y_vector_set}
\end{equation}}
Let us now discuss how to model the hybrid behaviour of the TCL-plus-SPS system \eqref{eqn:ui}-\eqref{eqn:u_hyst} into a set of linear integer constraints. First, we define
\begin{equation}\label{eqn:f_g}\small
\begin{aligned}
g_{\ell}^i &\triangleq y_{\ell}^i - y_{\min}^i,\\
f_{\ell}^i &\triangleq y_{\ell}^i - y_{\max}^i.\\
    \end{aligned}
\end{equation}
Clearly, $g_{\ell}^i\geq 0$ implies $y_{\ell}^i\geq y_{\min}^i$ and $f_{\ell}^i\geq 0$ implies $y_{\ell}^i\geq y_{\max}^i$. Let us define the logical \emph{connectives} ``$\band$'' (and), ``$\bor$'' (or), ``$\bnot$'' (not), $\bimplies$ (implies) and $\biff$ (if and only if). 

We associate two dummy binary variables $\delta_{1,\ell}^i$, $\delta_{2,\ell}^i \in\left\{0,1\right\}$ for each $\ell$, to the next inequalities
\begin{gather}
\small
    [\delta^i_{1,\ell}=1]  \biff [g^i_{\ell}\leq 0],   \label{eqn:delta1}\\
     \small[\delta^i_{2,\ell}=1] \biff  [f_{\ell}^i \leq 0].
     \label{eqn:delta2}
\end{gather}
If, for instance, $y_{\ell}^i\in \left(y^i_{min},y^i_{max}\right]$ then $\delta^i_{1,\ell}=0$ and $\delta_{2,\ell}^i=1$. 
\begin{figure}
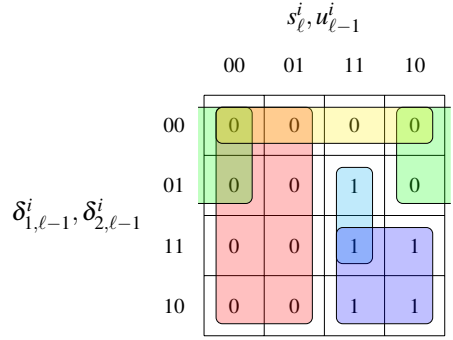

\centering
\scalebox{.8}{
    \begin{karnaugh-map}[4][4][1][\large $s_{\ell}^i,u_{\ell-1}^i$][\large $\delta^i_{1,\ell-1},\delta^i_{2,\ell-1}$]
        \manualterms{0,0,0,0,0,0,0,1,0,0,1,1,0,0,1,1}
        \implicant{0}{9}
        \implicantedge{0}{4}{2}{6}
        \implicant{0}{2}
        \implicant{7}{15}
        \implicant{15}{10}
    \end{karnaugh-map}}
    \caption{Karnaugh map of $u_{\ell}^i$ as function of $u_{\ell-1}^i$, $s_{\ell}^i$, $\delta^i_{1,\ell-1}$ and $\delta^i_{2,\ell-1}$. }\label{fig:map_table}
\end{figure}
    
In Figure~\ref{fig:map_table} we compute the Karnaugh map of the boolean variable $u^i_{\ell}$ based on the control logic \eqref{eqn:ui}-\eqref{eqn:u_hyst}, and the values of $u_{\ell-1}^i$ , $s^i_{\ell}$ and $\delta^i_{1,\ell-1}$, $\delta^i_{2,\ell-1}$. Notice that the knowledge of $y^i_{\ell}$, in accordance with \eqref{eqn:delta1}-\eqref{eqn:delta2}, provides an unique correspondence to $\delta^i_{1,\ell}$ and $\delta^i_{2,\ell}$. For the sake of clarity, we write the equivalent logical statements associated with Figure~\ref{fig:map_table}, as    
\begin{eqnarray} 
{\small [s^i_{\ell}=0]   \bimplies  [u^i_{\ell}=0] , \label{eqn:constraint_1}}\\
{\small   [\delta^i_{1,\ell-1}=0] \band  [u^i_{\ell-1}=0] \bimplies  [u^i_{\ell}=0],\label{eqn:constraint_2}}\\
{\small [\delta^i_{1,\ell-1}=0] \band [\delta^i_{2,\ell-1}=0]  \bimplies  [u^i_{\ell}=0],\label{eqn:constraint_3}}\\
{\small [s^i_{\ell}=1]  \band [u^i_{\ell-1}=1] \band [\delta^i_{2,\ell-1}=1] \bimplies [u^i_{\ell}=1],
\label{eqn:constraint_4}}\\
{\small     [s^i_{\ell}=1]  \band [\delta^i_{1,\ell-1}=1]  \bimplies  [u^i_{\ell}=1].\label{eqn:constraint_5}}
\end{eqnarray}

In particular, \eqref{eqn:constraint_1} is derived from \eqref{eqn:ui}, and it states that $s^i_{\ell}=0$ implies $u^i_{\ell}=0$, but not viceversa. This implication is shown in Figure~\ref{fig:map_table},  associated to the $\mathrm{4\times 2}$ implicant highlighted in red. Statement \eqref{eqn:constraint_2} is derived from \eqref{eqn:u_hyst}, it states that $u^i_{\ell}=0$ if $u^i_{\ell-1}=0$, and $y_{\ell-1}^i>y^i_{\mathrm{min}}$ (thus $\delta^i_{1,\ell-1}=0$);
in Figure~\ref{fig:map_table}, \eqref{eqn:constraint_2} is associated to the quadratic green implicant. Similarly, \eqref{eqn:constraint_3} states that $u^i_{\ell}=0$ if $y^i_{\ell-1}>y^i_{\mathrm{max}}$ (thus $\delta^i_{1,\ell-1}=\delta_{2,\ell-1}^i=0$); in Figure~\ref{fig:map_table}, \eqref{eqn:constraint_3} is associated to the $\mathrm{1\times 4}$ implicant highlighted in yellow. Finally the statements \eqref{eqn:constraint_4} and \eqref{eqn:constraint_5} are associated to, resp., the $\mathrm{2\times 1}$ cyan and $\mathrm{2\times 2}$ violet implications. In particular, they state that $u^i_{\ell}=1$ if $s^i_{\ell}=1$, $u^i_{\ell-1}=1$ (thus $h_{\ell-1}^i(t_k)=1$, i.e., the heating is ON)), $y^i_{\ell-1}\leq y^i_{\mathrm{max}}$ (thus $\delta_{2,\ell}^i(t_k)=1$), or resp., $s^i_{\ell}=1$ and $y^i_{\ell-1}<y^i_{\mathrm{min}}$ (thus $\delta_{1,\ell-1}^i=1$)). 

Although \eqref{eqn:constraint_4} and \eqref{eqn:constraint_5} require the TCL to be ON, $u^i_{\ell}=1$, whenever $y^i_{\ell-1}<y^i_{\mathrm{min}}$ (thus $\delta_{1,\ell-1}^i=1$), we have to include a further constraint to force the SPS to be ON, $s^i_{\ell}=1$, and thus allow the TCL to switch ON whenever necessary, as follows
\begin{align}\small
        [\delta^i_{1,\ell-1}=1] \bimplies  [s^i_{\ell}=1].
    \label{eqn:constraint_6}
\end{align}

Let us now translate \eqref{eqn:delta1}-\eqref{eqn:constraint_6} to a set of mixed-integer linear inequalities, according to the MLD paradigm, which can be solved by Mixed-Integer linear Programming. Firstly, note that the virtual temperature vector $\mathbf{y}^i(t_k)$ is bounded element-wise, thus \eqref{eqn:f_g} admits minimum and maximum points defined as follows
\begin{equation}\label{eqn:max_min_f_g}\small
\begin{array}{c}
m_{f}^i  = \min\limits_{\ell=1,\ldots,L}\;f_{\ell}^i, \quad\quad m_{g}^i  = \min\limits_{\ell=1,\ldots,L}\;g_{\ell}^i,\\
M_{f}^i  = \max\limits_{\ell=1,\ldots,L}\;f_{\ell}^i, \quad\quad M_{g}^i  = \max\limits_{\ell=1,\ldots,L}\;g_{\ell}^i.
\end{array}
\end{equation}

Then, by invoking \cite[\mbox{Properties (2d) and (4e)}]{bemporad1999control}, the logical implications in \eqref{eqn:delta1}-\eqref{eqn:constraint_6} can be formulated as the next set of linear mixed inequalities for $\ell=1,\ldots,L$
\begin{align}\label{eqn:X_relay:eq1_X1}
\eqref{eqn:delta1}~\equiv~&{\small \left\lbrace\begin{array}{rcl}
     g^i_{\ell}&\leq& M_g^i\cdot(1-\delta_{1,\ell}^i)  \\
     g^i_{\ell}&\geq& m_g^i\cdot \delta_{1,\ell}^i
\end{array}\right.}\\
\label{eqn:X_relay:eq3_X1}
    \eqref{eqn:delta2}~\equiv~&{\small\left\lbrace\begin{array}{rcl}
    f^i_{\ell}&\leq& M_f^i\cdot(1-\delta_{2,\ell})\\
    f^i_{\ell}&\geq& m_f^i\cdot \delta_{2,\ell}\\
\end{array}\right.}\\
\label{eq19}
\eqref{eqn:constraint_1}~\equiv~&{\small s^i_{\ell}} \geq  u^i_{\ell}\\
\label{eq21}
\eqref{eqn:constraint_2}~\equiv~&{\small u^i_{\ell}\geq \delta_{1,\ell-1}+u^i_{\ell-1}}\\
\label{eq20}
\eqref{eqn:constraint_3}~\equiv~&{\small u^i_{\ell}\geq  \delta_{1,\ell-1}+\delta_{2,\ell-1}}\\
\label{eq17}
\eqref{eqn:constraint_4}~\equiv~&{\small u^i_{\ell}\geq s^i_{\ell}+u^i_{\ell-1}+\delta_{2,\ell-1}-2}\\
\label{eq18}
\eqref{eqn:constraint_5}~\equiv~&{\small u^i_{\ell}\geq s^i_{\ell}+\delta_{1,\ell-1}-1}\\
\label{eq22}
\eqref{eqn:constraint_6}~\equiv~&{\small s^i_{\ell}\geq \delta_{1,\ell-1}}.
\end{align}

Now, experimental tests have shown that the performance of the proposed hybrid observer is greater if the SPS is ON most of the time. Thus, we introduce a constraint which forces the SPS to be ON for at least a minimum fraction $S^i_{\mathrm{\%,on}}$, of the receding horizon time-window $\tau(t_k)$. In particular, let $S^i_{\mathrm{\%,on}}=\frac{r}{L}$ where $r\in \left[1,L\right]$ is an integer, then 
\begin{equation}
\small
    \sum_{\ell=1}^L s^i_{\ell}\geq L \cdot {S}^i_{\mathrm{\%,on}}
    \label{eqn:s_limit_off}
\end{equation}

Thanks to \eqref{eqn:s_limit_off}, the number of occurrences of synchronization events $\mathrm{E_{on}}$ and $\mathrm{E_{off}}$ in the hybrid observer during the experiments has been greater and thus its corresponding estimation is reliable more often, improving the performance of the overall control architecture.


Finally, by combining \eqref{eqn:y_update_big}, and \eqref{eqn:X_relay:eq1_X1}-\eqref{eqn:s_limit_off} with $\ell=1,\ldots,L$, we get the set of local MLD constraints $\mathcal{\chi}^i(t_k)$  associated to each agent $i$ in the DSM optimization problem of the functional \eqref{eqn:globalObjective}. 

\begin{definition}[Local constraint set $\mathcal{\chi}^i(t_k)$] \label{constraintDefinition}
The SPS control action by each agent $i$ is constrained by a set of linear mixed inequalities denoted as $\mathcal{\chi}^i(t_k)$, which includes inequalities from \eqref{eqn:X_relay:eq1_X1} to \eqref{eqn:s_limit_off} and the equality \eqref{eqn:y_update_big} 
with the addition of $\mathbf{s}^{i}_{\ell}, \mathbf{u}^{i}_{\ell}, \mathbf{\delta}_{1,\ell}^{i}, \mathbf{\delta}_{2,\ell}^{i} \in \left\{0,1\right\}$ and $y^i_{\ell}\in \rea^+$ for $\ell=1,\ldots,L$. \hfill $\blacksquare$
\end{definition}

\medskip

Note that set $\mathcal{\chi}^i(t_k)$ consists of $11\cdot L+1$ linear inequalities and $5L$ variables but
only the $L$ elements of vector $\mathbf{s}^{i}(t_k)$  correspond to actual decision variables. Indeed, if the SPS control actions $\mathbf{s}^{i}(t_k)$ as defined in \eqref{eqn:s_vector} are given, all other variables are uniquely defined by~$\mathcal{\chi}^i(t_k)$.

\subsection{Protocols for Dynamic Average Consensus}\label{subsect:dyn_consensus}

One of the key ideas of the proposed architecture is to exploit local interaction protocols for dynamic average consensus to enable each agent to estimate online the profile of the time-varying future planned global average power consumption $P_\ell$ in \eqref{eqn:Pell} of the TCLs. The estimation computed at time $t_k$ of the average power consumption at time $t_{k+\ell}=t_{k}+\ell \Delta \tau$, is
\begin{equation}\small
\bar{P}_{\ell}(t_k)=\frac{1}{n}\cdot P_\ell(t_k)
=\frac{1}{n}\sum_{i\in\mathbfcal{V}} p^i_\ell(t_k)=\frac{1}{n}\sum_{i\in\mathbfcal{V}}\mathrm{p}^i\cdot u^i_{\ell},~\ell=1,\dots,n,
    \label{eqn:Pell_bar}
\end{equation}
where $p^i_\ell(t_k)=\mathrm{p}^i\cdot{u}_{\ell}^i(t_k)$ is the  power consumption plan of the $i$-th agent, available at time $t_k$. 
In the reminder ${p}^i_\ell(t_k)$ is considered as the local time-varying reference input of each agent that executes the dynamic consensus algorithm. 

In particular, let $\bar{P}^i_\ell(t_k)$ be the estimation computed at time $t_k$ of the average power consumption $\bar{P}_\ell$ \eqref{eqn:Pell_bar} at time $t_{k+\ell}=t_{k}+\ell \Delta \tau$. To estimate the power consumption profile in the receding horizon window $\tau(t_k)$ we use $L$ instances of the dynamic average consensus algorithm, one for each $t_{k+\ell}$. Due to space limitations, we do not discuss here the details of dynamic consensus algorithms. The reader is referred to \cite{SolmazTutorial} for a comprehensive treatment of the topic. The local interaction protocol of a dynamic consensus algorithm can stated as
\begin{equation}\small
\bar{P}^i_{\ell}(t+dt)=\mathrm{D\_Consensus\_Update}(\bar{P}^j_{\ell}(t)|_{j\in\mathbfcal{N}^i},\bar{P}^i_{\ell}(t),p^i_\ell(t_k)).
\label{eqn:dyn_consensus}
\end{equation}
where ${dt}$ is time required to execute one iteration of the dynamic consensus protocol \eqref{eqn:dyn_consensus} according to the capabilities of the network. Local auxiliary variables of dynamic consensus algorithms are omitted here. Each agent updates its local estimation $\bar{P}^i_\ell$ through local interactions among its neighborhood $\mathbfcal{N}^i$, by exchanging their local estimation $\bar{P}^j_\ell$. Thus, preserving the agents' privacy since their own schedule plans $p^j_\ell(t_k)$ are note delivered to others. Due to the space limitation, details and notation of dynamic consensus algorithms are omitted. 


We point out that among the many dynamic consensus algorithms available in the literature, see e.g. \cite{spanos2005dynamic}-\cite{Franceschelli201969,Montijano20143131,Freeman2015},
for we decided to adopt the solution proposed in \cite{Franceschelli201969} for the following main reasons:
\begin{itemize}
    \item it can be easily tuned to achieve a desired trade-off between steady error and maximum tracking error;
    \item it is robust to re-initialization due to changes in the network topology or size;
    \item it  can be implemented with randomized asynchronous state updates, thus it is resilient against communication failures or agent logout during the algorithm execution.
\end{itemize}

At this point, we recall that  the performances of a dynamic consensus algorithm can be evaluated in terms of: a) the maximum tracking error of the estimated average of the time-varying reference signals
\begin{equation}\small
e_{\mathrm{track}}(t)=\max_{i\in \mathbfcal{V}} \left|\bar{P}^i_{\ell}(t)-\bar{P}_{\ell}(t)\right|=\max_{i\in \mathbfcal{V}}|\xi^i_{\ell}|,
\label{eqn:e_tracking}
\end{equation}
b) the steady-state error for constant references
\begin{equation}\small
e_{\mathrm{steady}}=\lim_{t\rightarrow \infty} \max_{i\in \mathbfcal{V}} \left|\bar{P}^i_{\ell}(t)-\bar{P}_{\ell}\right|,
\label{eqn:e_steady}
\end{equation}
and c) the convergence rate, which dictates how many iterations of the local state updates are required to achieve the steady-state error performance. Since in the proposed application the timescale $d t$ of the iterations of the dynamic consensus is in the range of the milliseconds, the timescale $\Delta t$ of the asychronous local optimizations executed by the SPS is in the range of the seconds, and a time-slot $\Delta \tau $ in the receding-horizon time window is in the range of minutes, most dynamic consensus algorithm in the literature are able to achieve their steady-state performance for all practical purposes. 

In our framework, to comply with typical assumptions of the dynamic consensus literature, we assume the connectedness of graph $\mathbfcal{G}$. Furthermore we assume that the maximum tracking error $\max_{i\in \mathbfcal{V}}|\xi^i_{\ell}|\leq \xi$ is bounded at all times.

\section{TCL Cooperation Control Protocol}\label{subsect:heuristic}

{
We now present the TCL Cooperation Control Protocol, detailed in Algorithm \ref{mainalgo}, which enables cooperation among the TCL by minimizing the global objective function \eqref{eqn:globalObjective} under the integer linear local constraints $\chi^i(t_k)$ derived in subsection~\ref{subsect:mld}. These constraints uniquely determine $u^i_{\ell}$ for $\ell=1,\ldots,L$ as function of the only decision variable, i.e., the SPS state $s^i_{\ell}$, based on  the current estimation of the virtual temperature $y^i_{\ell}$ obtained by the local observer, the current ON/OFF discrete state of the TCL, the system parameters identified in subsection \ref{subsubsect:Sys_Id}, the MLD modeling of the TCL plus SPS hybrid system and the average TCL power consumption profile of the network predicted via dynamic consensus. The considered global objective function is thus 
\begin{equation}\label{globalObjective}
\min_{s^i\in \chi^i(t_k),  i \in \mathbfcal{V}} J(t_k)
=\frac{1}{L}\sum_{\ell=1}^{L} \left(\sum_{i\in \mathbfcal{V}} \mathrm{p}_i \cdot u^i_{\ell}(t_k)\right)^2.
\end{equation}
}

\begin{algorithm}[ht!]
  \begin{algorithmic}[0]\small
  \Statex \textbf{ Algorithm's Parameters}:

    { $dt$: Maximum execution time of a dynamic consensus update;}
    
    { $\Delta t$: Maximum execution time of the local optimization;}

    { $\Delta \tau$: Time length of an optimization time-slot};

    {$L$: Number of time slots of the time window horizon;}
  
    {$\tau(t_k)=L\cdot \Delta \tau$: Optimization's receding horizon time window;}
  
    
    $\xi$: maximum tracking error of dynamic consensus algorithm;
    
    $\epsilon$: small positive constant;
    
    $\mu_i$: probability of execution of a local optimization;
  
   \Statex \textbf{ Algorithm Inputs}:
   
    { $y^i(t_k)$: Estimated virtual temperature;}
    
    { $\mathrm{Obs\_State}(t_k)=\lbrace{\mathrm{UR}, \mathrm{RE}}\rbrace$: Observer discrete state;}
    
    \Statex \textbf{Algorithm Outputs}: 
    
    {  $\mathbf{s}^i(t_k)=[s_1^i\cdots s_L^i]^\intercal\in\lbrace 0,1 \rbrace^L$: Scheduling plan of SPS $i$;}
    
    { $\mathbf{u}^i(t_k)=[u_1^i \cdots u_L^i]^\intercal\in\lbrace 0,1 \rbrace^L$: Predicted scheduling of power consumption of TCL $i$;}

    { $\mathbf{p}^i(t_k)=\mathrm{p}^i\cdot\mathbf{u}^i(t_k)$: Expected power consumption of TCL $i$;}


\\
    \Statex \textbf{Initialize counter}: $k=0$;

    \Statex \textbf{Execute in parallel the next tasks:}
\\
    \Statex $\bullet$ \textbf{Task a.} Every $dt$ seconds:

    \Statex \quad $1.$ Gather $\bar{P}_{\ell}^j$ for $\ell=1,\ldots,L$, from neighbors $j \in \mathbfcal{N}^i(t_k)$;
    \Statex \quad $2.$ Update state variables  $\bar{P}_{\ell}^i$ for $\ell=1,\ldots,L$, according to the \textbf{dynamic consensus algorithm} in \eqref{eqn:dyn_consensus};
    \\
   \Statex $\bullet$  \textbf{Task b.} Every $\Delta t$ seconds:
   \Statex \quad $1.$ Measure power consumption;
   \Statex \quad $2.$ Update the state of the virtual temperature observer and collect the virtual temperature $y_i(t_k)$;
   \Statex \quad $3.$ \textbf{If} $\mathrm{Obs\_State}(t_k)=\lbrace{\mathrm{RE}}\rbrace$, i.e., the local observer is in state "Reliable" \textbf{then}, with probability $\mu_i$ update the ON/OFF scheduling according to an approximate solution of the next problem by a time-constrained ($\Delta t$) heuristic:
\begin{equation}\small  \label{localOptproblem}
[\mathbf{s}^{i,\star},\mathbf{u}^{i,\star}]=\argmin_{\mathbf{s}^i\in \mathcal{\chi}(t_k)} \quad J^{i}(t_k)= \displaystyle \sum_{\ell=1}^L {P}_{\ell}^i p_i u^i_{\ell},
\end{equation}
\begin{equation}\small  \label{localoptimalsol}
J^{i,\star}= \displaystyle \sum_{\ell=1}^L {P}_{\ell}^i p_i u^{i,\star}_{\ell}.
\end{equation}
   

  \Statex \quad \quad \textbf{If} a solution $[\mathbf{s}^{i,\star},\mathbf{u}^{i,\star}]$ is found within $\Delta t$ seconds and
\begin{equation}\tiny
|J^{i}(t_k)-J^{i,\star}|=\gamma_i \geq \frac{\xi p_i}{L}\sum_{\ell=1}^{L} | u^{i,\star}_{\ell}-u^{i}_{\ell}|+\varepsilon,
  \nonumber
\end{equation}


  \Statex \quad \quad \textbf{then} set $\mathbf{s}^i(t_k):=\mathbf{s}^{i,\star}$, $\mathbf{u}^{i}(t_k):=\mathbf{u}^{i,\star}$
  \Statex \quad \quad \textbf{endif}
  \Statex \quad \textbf{else} $\mathbf{s}^i(t_k):=\mathbf{s}^i(t_{k})$, $\mathbf{u}^{i}(t_k):=\mathbf{u}^{i}(t_k)$, i.e., do nothing.
  \Statex \quad \textbf{Endif}

  { \Statex $\bullet$  \textbf{Task c.} Every $\Delta \tau$ seconds:}
  
  { \Statex \quad $1.$ Set the current SPS state equal to: $$\small s^i(t_{k+1}):=s^i_{1}(t_k);$$}
  {\Statex \quad $2.$ Shift the receding horizon time window by $\Delta \tau$:
    $$\small\mathbf{u}^i(t_{k+1}):=\begin{bmatrix}u^i_{2}(t_k) & \cdots & u^i_\ell(t_k) & \cdots & u^i_{L-1}(t_k) & 1\end{bmatrix}^\intercal;$$}
    { \Statex \quad $3.$ Shift SPS scheduling by $\Delta \tau$:
    $$\small\mathbf{s}^i(t_{k+1}):=\begin{bmatrix}s^i_{2}(t_k) & \cdots & s^i_\ell(t_k) & \cdots & s^i_{L-1}(t_k) & 1\end{bmatrix}^\intercal;$$}
    \Statex Let $k:=k+1$
    \Statex \textbf{Endif}
    \end{algorithmic}
  \caption{TCL Cooperation Protocol}
  \label{mainalgo}
\end{algorithm}

The ``TCL Cooperation Protocol'', consists of a local state update rule executed by each agent indefinitely. Each agent owns a local prediction of the future average TCL power consumption of the network over the horizon $\mathcal{\tau}(t_k)$. To update this prediction each agent executes the multi-stage dynamic consensus algorithm proposed in~\cite{FraGas2016,Franceschelli201969}. In particular, at each iteration each agent attempts to minimize, with probability $\mu$, a local objective function which consists in the agents ON/OFF SPS scheduling weighted by the predicted average TCL power consumption of the network over the receding horizon time-window, subject to the local MLD constraints $\mathcal{\chi}^i(t_k)$ illustrated in subsection~\ref{subsect:mld}. 
Notice that, the local constraints of each agent $\mathcal{\chi}^i(t_k)$ are time-varying because they depend upon the current state of the TCL at the time the optimization takes place.


Although the optimization problem in \eqref{localOptproblem} is in general NP-hard, as it involves mixed integer linear programming, in our setting for each agent~$i$, the number of variables to be optimized is relatively small as only local constraints over a short time horizon $\mathcal{\tau}(t_k)$ are involved.  For example, about $20$/$60$ steps into the future may account for $30$ minutes to two-three hours of operations, depending on the tuning of the algorithm. Thus, the complexity of the proposed method does not increase with the size of the network, but only with respect to the size of the time horizon $L$. This can be shown by noticing that the local optimization executed by each agent involves only its own state and its own prediction of the network average power consumption while cooperation with other agents is achieved only through the execution of the dynamic consensus protocol which is designed for large scale networks.

Furthermore, it should be noticed that approximate solutions to local optimization problems are sufficient to execute the heuristic. In particular, we exploit a standard branch and bound solver with a limited maximum execution time. Finally, if an approximate solution does not improve on the current scheduling of operation of the generic agent, it is simply discarded and the existing scheduling is kept out until a better one is found in future iterations. In section \ref{sect:simulation} we provide a detailed discussion of numerical simulations and experimental tests carried out for our case study. 

\subsection{Convergence analysis}
Next, we prove that a feasible solution to the local optimization problem in \eqref{localOptproblem} always exists. 

\begin{propo} [Local problem feasibility]
There exist at least one feasible solution to the local optimization Problem \eqref{localOptproblem} for all $t_k\geq 0$.
The solution is $s^i(t_{k+\ell})=1$, for $\ell=1,\ldots,L,$ which corresponds to an SPS always ON.








\end{propo}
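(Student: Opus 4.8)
The plan is to exhibit the always-ON schedule $s^i_\ell = 1$, $\ell = 1,\dots,L$, as an explicit feasible point of \eqref{localOptproblem}: I will show it can be completed to an assignment of the variables $u^i_\ell,\,y^i_\ell,\,\delta^i_{1,\ell},\,\delta^i_{2,\ell}$ satisfying every relation in $\chi^i(t_k)$, for arbitrary current data $y^i(t_k)\in\rea^+$ and $u^i(t_k)\in\{0,1\}$, which yields the claim at every $t_k\ge 0$. First I would dispose of the constraints in which $s^i_\ell$ appears on the ``large'' side. With $s^i_\ell\equiv 1$, \eqref{eq19} holds since its left side is $1\ge u^i_\ell\in\{0,1\}$; \eqref{eq22} holds since its left side is $1$, the maximal value of the integer $\delta^i_{1,\ell-1}$; and \eqref{eqn:s_limit_off} holds since $\sum_{\ell=1}^L s^i_\ell = L \ge L\,S^i_{\mathrm{\%,on}}$, because $S^i_{\mathrm{\%,on}} = r/L$ with $r\le L$. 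Also, by \eqref{eqn:ui}, $s^i_\ell\equiv 1$ makes $u^i_\ell = h^i_\ell$, so the completion we are after is essentially the free evolution of the TCL under its own thermostat \eqref{eqn:u_hyst} with the SPS held ON.

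Next I would build that completion explicitly by the forward recursion induced by the MLD model (this is the assignment referred to in the remark following Definition~\ref{constraintDefinition}): set $\delta^i_{1,0}=\mathbf 1[y^i(t_k)\le y^i_{\min}]$, $\delta^i_{2,0}=\mathbf 1[y^i(t_k)\le y^i_{\max}]$, and for $\ell=1,\dots,L$ choose $u^i_\ell$ by the Karnaugh map of Figure~\ref{fig:map_table} specialized to $s^i_\ell = 1$ — namely $u^i_\ell = 0$ when $\delta^i_{2,\ell-1}=0$, or when $\delta^i_{1,\ell-1}=0$ and $u^i_{\ell-1}=0$, and $u^i_\ell = 1$ otherwise — then compute $y^i_\ell$ from \eqref{eqn:dyn_y} (equivalently from \eqref{eqn:y_update_big}) and set $\delta^i_{1,\ell},\,\delta^i_{2,\ell}$ as the region indicators of $y^i_\ell$ relative to $y^i_{\min},\,y^i_{\max}$. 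This recursion is well defined for every initial condition, $y^i_\ell\in\rea^+$ because $A_i,B_i>0$ and $u^i_\ell\ge 0$, and the integrality requirements are met. The equality \eqref{eqn:y_update_big} and the big-$M$ encodings \eqref{eqn:X_relay:eq1_X1}--\eqref{eqn:X_relay:eq3_X1} of \eqref{eqn:delta1}--\eqref{eqn:delta2} then hold by construction, provided the constants $m^i_f,M^i_f,m^i_g,M^i_g$ of \eqref{eqn:max_min_f_g} are valid bounds on the realized $f^i_\ell,g^i_\ell$, which they are.

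The remaining and only substantive step is to verify the logical constraints \eqref{eq21},\eqref{eq20},\eqref{eq17},\eqref{eq18} — equivalently \eqref{eqn:constraint_2}--\eqref{eqn:constraint_5} — along this trajectory, which I would do by induction on $\ell$ with a finite case split on the pair $\big(\text{region of }y^i_{\ell-1},\,u^i_{\ell-1}\big)$: the three regions $y^i_{\ell-1}\le y^i_{\min}$, $y^i_{\ell-1}\in(y^i_{\min},y^i_{\max}]$, $y^i_{\ell-1}>y^i_{\max}$ pin down $(\delta^i_{1,\ell-1},\delta^i_{2,\ell-1})\in\{(1,1),(0,1),(0,0)\}$, the recursion pins down $u^i_\ell$, and with $s^i_\ell = 1$ each of the four inequalities reduces to an integer comparison that holds, typically with equality. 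The hard part will be the first step together with a current state $y^i(t_k)$ lying outside $[y^i_{\min},y^i_{\max}]$ — which can occur right after the observer re-enters the RE state following an Out-of-Bound clamp: there one must check that the recursion, which keeps the heater ON slightly past $y^i_{\max}$ (the MLD band $[y^i_{\min},y^i_{\max}]$ is closed at the top) and forces it ON below $y^i_{\min}$, remains consistent with \eqref{eqn:constraint_3} and \eqref{eqn:constraint_5}, so that no contradiction is created before $y^i_\ell$ relaxes back into $[y^i_{\min},y^i_{\max}]$ and the standard hysteresis cases apply. Since every step of the recursion assigns $u^i_\ell$ compatibly with all constraints, the induction goes through and $s^i\equiv 1$ is feasible for all $t_k\ge 0$.
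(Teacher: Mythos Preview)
Your proposal is correct and follows essentially the same approach as the paper: both arguments set $s^i_\ell\equiv 1$, observe that \eqref{eq19}, \eqref{eq22}, \eqref{eqn:s_limit_off} then hold trivially, and that the remaining MLD constraints reduce to the free thermostat evolution \eqref{eqn:u_hyst}, whose forward recursion uniquely and consistently determines $u^i_\ell$, $y^i_\ell$, $\delta^i_{1,\ell}$, $\delta^i_{2,\ell}$. Your version is considerably more detailed than the paper's---you spell out the recursion, the case split on the three temperature regions, and the boundary case where $y^i(t_k)\notin[y^i_{\min},y^i_{\max}]$---whereas the paper dispatches the logical constraints in one sentence by appealing to the Karnaugh map, but the underlying argument is the same.
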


\begin{proof}
We now show that by construction the list of constraints included in set $\chi^i(t_k)$ as in Definition \ref{constraintDefinition} is always satisfied by a solution with $s^i(t_{k+\ell})=1$ for $\ell=1,\ldots,L,$.

First, the equality in constraint \eqref{eqn:y_update_big} represents the linear dynamics of the TCL virtual temperature, as such for any choice of control $\mathbf{u}^i(t_k)$ and initial virtual temperature $y^i_{1}$ the predicted virtual temperature vector $\mathbf{y}^i(t_k)$ exists and is uniquely defined.

Constraints \eqref{eqn:X_relay:eq1_X1} and \eqref{eqn:X_relay:eq3_X1} define the value of the boolean auxiliary variables $\delta^i_{1,\ell}$ and $\delta^i_{2,\ell}$ which represent the conditions $\delta^i_{1,\ell}=1 \biff y^i_{\ell}\leq y^i_{min}$ and  $\delta^i_{2,\ell}=1 \biff y^i_{\ell}\leq y^i_{max}$. Therefore, given vector $\mathbf{y}^i(t_k)$ it follows that $\delta^i_{1,\ell}$ and $\delta^i_{2,\ell}$ are uniquely defined for $\ell=1,\ldots,L$.

Constraint \eqref{eq19} is trivially satisfied by $s^i_{\ell}=1$.

If we substitute for $s^i_{\ell}=1$  in constraints \eqref{eq21}, \eqref{eq20}, \eqref{eq17}, \eqref{eq18} and  \eqref{eq22} we obtain the thermostatic control logic in \eqref{eqn:u_hyst} modeled via MLD, thus for all possible values of virtual temperature and TCL state there is a corresponding (always feasible) control input.

Finally, constraints  \eqref{eq22} and \eqref{eqn:s_limit_off} are trivially satisfied by $s^i(t_{k+\ell})=1$ for $\ell=1,\ldots,L$.  \hfill $\blacksquare$
\end{proof}

Next, we present a characterization of the convergence properties of Algorithm \ref{mainalgo}, i.e., if at each iteration the dynamic consensus algorithm has bounded tracking error than the proposed TCL cooperation protocol optimizes online the global objective function \eqref{globalObjective}.

Let $J^+(t_k)$ be the value of the global objective $J(t_k)$ \eqref{globalObjective} after one agent updates its own local control action $\mathbf{s}^i(t_{k})$ during the execution of Task (b) of Algorithm~\ref{mainalgo}.

\begin{theorem}\label{theo:online_opt}[Online Optimization]
Consider a network of agents that executes Algorithm~\ref{mainalgo}.  If the dynamic consensus algorithm executed as part of Algorithm~\ref{mainalgo} in Task (a) has maximum tracking error less than $\xi$ and it holds  
$$\small
|J^{i}(t_k)-J^{i,\star}|=\gamma_i \geq \frac{\xi p_i}{L}\sum_{\ell=1}^{L} |u^{i,\star}_{\ell}-u^i_{\ell}|+\varepsilon,
$$
where  $J^{i,\star}$ is the approximation of the optimal local solution computed by agent $i$ at step $3$ of Algorithm~\ref{mainalgo}, then the global objective value decreases and it holds
\begin{equation}\small
J^+(t_k) \leq J(t_k)-n\varepsilon.
\end{equation}
where $\varepsilon$ is a small positive constant and $n$ is the number of agents.
\end{theorem}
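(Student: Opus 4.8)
The plan is to reduce the statement to the effect of a single accepted update in Task~(b) of Algorithm~\ref{mainalgo}: fix the iteration at which agent $i$ replaces its planned consumption vector $\mathbf{u}^i$ by $\mathbf{u}^{i,\star}$, keep every other agent's plan (hence every other term of the network power profile) frozen, and write $\Delta u^i_\ell = u^{i,\star}_\ell-u^i_\ell$ and $P_\ell$ for the pre-update network power at slot $\ell$. The first step is to compute the change of the global objective \emph{exactly}: expanding the square in \eqref{globalObjective},
\[
J^+(t_k)-J(t_k)=\frac1L\sum_{\ell=1}^{L}\big[(P_\ell+p_i\Delta u^i_\ell)^2-P_\ell^2\big]
=\frac{2p_i}{L}\sum_{\ell=1}^{L}P_\ell\,\Delta u^i_\ell+\frac{p_i^2}{L}\sum_{\ell=1}^{L}|\Delta u^i_\ell|,
\]
where I used that $u^i_\ell,u^{i,\star}_\ell\in\{0,1\}$, so $(\Delta u^i_\ell)^2=|\Delta u^i_\ell|$. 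This reduces the problem to controlling the sign and size of the linear term $\sum_\ell P_\ell\,\Delta u^i_\ell$.

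Next I would bring in the local optimization. Since the current plan $\mathbf{u}^i$ is always a feasible candidate for the heuristic that solves \eqref{localOptproblem}, an accepted solution never increases the local objective, so $\gamma_i=J^i(t_k)-J^{i,\star}\ge 0$, and by linearity of $J^i$ in $\mathbf{u}^i$ one has $p_i\sum_\ell P^i_\ell\Delta u^i_\ell=J^{i,\star}-J^i(t_k)=-\gamma_i$. Here $P^i_\ell$ is agent $i$'s dynamic-consensus estimate of the network-average planned power $\bar P_\ell$ of \eqref{eqn:Pell_bar}, and the standing hypothesis that the tracking error is below $\xi$ gives $|P^i_\ell-\bar P_\ell|\le\xi$ for every $\ell$. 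Rewriting $P_\ell$ through $\bar P_\ell$ and then through $P^i_\ell$ (which is where the average-to-total normalization, and with it the factor $n$, enters) and splitting off the estimation error,
\[
\sum_{\ell=1}^{L}P_\ell\,\Delta u^i_\ell
= n\sum_{\ell=1}^{L}P^i_\ell\,\Delta u^i_\ell + n\sum_{\ell=1}^{L}\big(\bar P_\ell-P^i_\ell\big)\Delta u^i_\ell
\ \le\ -\frac{n\,\gamma_i}{p_i}+n\,\xi\sum_{\ell=1}^{L}|\Delta u^i_\ell| .
\]
Plugging this into the first display yields an upper bound on $J^+(t_k)-J(t_k)$ in which a negative multiple of $\gamma_i$ competes against a multiple of $\xi\sum_\ell|\Delta u^i_\ell|$, plus the lower-order ``self-curvature'' term $\frac{p_i^2}{L}\sum_\ell|\Delta u^i_\ell|$ produced by agent $i$'s own quadratic contribution.

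Finally I would invoke the acceptance rule. The condition $\gamma_i\ge\frac{\xi p_i}{L}\sum_\ell|\Delta u^i_\ell|+\varepsilon$ that gates step~$3$ of Algorithm~\ref{mainalgo} — which is precisely the hypothesis of the theorem — is calibrated so that the $-\gamma_i$ term strictly dominates the consensus-error term (and the self-term), leaving a strictly negative residual of order $\varepsilon$, from which $J^+(t_k)\le J(t_k)-n\varepsilon$ follows. The step I expect to be the main obstacle is exactly this last piece of bookkeeping: one must carry through precisely the constants coming from (i) the average-versus-total normalization $\bar P_\ell=P_\ell/n$ of \eqref{eqn:Pell_bar} passing through the linear surrogate $J^i$, (ii) the gap between the linear local objective and the exact (still linear, after using $(u^i_\ell)^2=u^i_\ell$) dependence of the quadratic $J$ on $\mathbf{u}^i$, including the self-term $\frac{p_i^2}{L}\sum_\ell|\Delta u^i_\ell|$, and (iii) the aggregation of the per-slot bound $|P^i_\ell-\bar P_\ell|\le\xi$ against $\sum_\ell|\Delta u^i_\ell|$, so that everything lines up with the $\frac{\xi p_i}{L}\sum_\ell|\Delta u^i_\ell|+\varepsilon$ threshold and the claimed $n\varepsilon$ margin. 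Everything else — feasibility of the updated plan, and the fact that rejected updates leave $J(t_k)$ unchanged — is immediate from the preceding Proposition and the structure of Algorithm~\ref{mainalgo}.
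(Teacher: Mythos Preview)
Your plan is sound but follows a different route from the paper, and the ``main obstacle'' you flag is precisely where the two diverge. The paper does not expand the square. Instead it uses the bilinear factorization
\[
J(t_k)=\frac{1}{L}\sum_{\ell}P_\ell\cdot P_\ell=\frac{1}{L}\sum_{\ell}n\bar P_\ell\sum_{j}p_ju^j_\ell=n\sum_{j}J^j(t_k),\qquad J^j(t_k)=\frac{1}{L}\sum_{\ell}\bar P_\ell\,p_ju^j_\ell,
\]
and then treats the weights $\bar P_\ell$ as \emph{frozen} when agent $i$ replaces $\mathbf u^i$ by $\mathbf u^{i,\star}$, so that $J^+(t_k)-J(t_k)=n\bigl(J^{i,\star}-J^i(t_k)\bigr)$. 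From there the bound $|\bar P^i_\ell-\bar P_\ell|\le\xi$ gives $J^{i,\star}-J^i(t_k)\le-\gamma_i+\frac{\xi p_i}{L}\sum_\ell|\Delta u^i_\ell|$, and the acceptance rule yields $J^+-J\le -n\varepsilon$ with the constants matching on the nose.

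Your exact expansion is more honest: it produces a factor $2$ in front of the linear term and the self-curvature term $\frac{p_i^2}{L}\sum_\ell|\Delta u^i_\ell|$, neither of which appears in the paper's argument because the paper has effectively linearized $J$ in $\mathbf u^i$ (freezing one copy of $P_\ell$). That linearization is what makes the threshold $\frac{\xi p_i}{L}\sum_\ell|\Delta u^i_\ell|+\varepsilon$ line up exactly. If you insist on your exact computation, you will not recover the stated constants without an additional argument that the self-term is lower order (it is $O(1)$ versus the $O(n)$ linear term, hence absorbable into $\varepsilon$ for large $n$), and you will pick up an innocuous factor of $2$. So your bookkeeping worry is real; the paper resolves it by the linearized decomposition rather than by tracking those extra terms.
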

\begin{proof}
The global objective to be optimized online is
\begin{equation}\label{equazione01}\small
J(t_k)=\frac{1}{L} \sum_{\ell=1}^{L} \left(\sum_{i\in \mathbfcal{V}} p_i u^i_{\ell}\right)^2.
\end{equation}
Now, the average power consumption estimated by each agent with the dynamic consensus algorithm has, in general, a time-varying estimation error $\xi^i_{\ell}$ with respect to the real average power consumption $\bar{P}_{\ell}$ such that
\begin{equation}\label{equazione06}\small
\bar{P}_{\ell}=\frac{\sum_{i\in \mathbfcal{V}} p_i u^i_{\ell}}{n}=\bar{P}^i_{\ell}+\xi^i_{\ell}.
\end{equation}
Thus, we can rewrite \eqref{equazione01} as
\begin{equation}\label{equazione02}\small
\begin{array}{ll}
J(t_k)& =\frac{1}{L} \displaystyle \sum_{\ell=1}^{L}\left(\sum_{i\in \mathbfcal{V}} p_i u^i_{\ell}\right) \left( \sum_{i\in \mathbfcal{V}} p_i u^i_{\ell}\right) \\
&= \frac{1}{L} \displaystyle \sum_{\ell=1}^{L} n \bar{P}_{\ell} \left(\sum_{i\in \mathbfcal{V}} p_i u^i_{\ell}\right)\\
& = \frac{1}{L} \displaystyle  \sum_{\ell=1}^{L} \sum_{i\in \mathbfcal{V}}n \bar{P}_{\ell}  p_i u^i_{\ell} \\
& = \displaystyle n \sum_{i\in \mathbfcal{V}} \frac{1}{L} \sum_{\ell=1}^{L} \left(\bar{P}^i_{\ell}+\xi^i_{\ell}  \right)  p_i u^i_{\ell}(t_k). \\
\end{array}
\end{equation}
Now, we first notice that
\begin{equation}\label{equazione03}\small
J(t_k)=\displaystyle n \sum_{i\in \mathbfcal{V}} J^i(t_{k}),
\end{equation}
where
\begin{equation}\label{equazione04}\small
J^i(t_{k})=\frac{1}{L}\sum_{\ell=1}^{L} \left(\bar{P}^i_{\ell}+\xi^i_{\ell}  \right)  \mathrm{p}_i u^i_{\ell}.
\end{equation}
Therefore, if exact knowledge of $\bar{P}_{\ell}=\bar{P}^i_{\ell}+\xi^i_{\ell}$ were available to each agent, we could guarantee the optimization of the global objective function over the non-convex set of constraints (up to a local minimum) by updating the local ON/OFF scheduling as
\begin{equation}\small
[\mathbf{s}^{i,\star},\mathbf{u}^{i,\star}]=\argmin_{\mathbf{s}^i\in \mathcal{\chi}(t_k)}\quad \displaystyle \frac{1}{L} \sum_{\ell=1}^L \bar{P}_{\ell} \mathrm{p}^i u^i_{\ell}.
\end{equation}

Instead, since the agents do not have access to global information regarding the network, i.e., they do not have access to the updated values of $\bar{P}_{\ell}$ during the iterations, an estimation of $\bar{P}_{\ell}$ is employed. 

Since we consider a dynamic consensus process which has bounded tracking error less than $\xi$, it holds
\begin{equation}\label{equazione7}\small
\max_{i\in\mathbfcal{V}} |\bar{P}^i_{\ell}-\bar{P}_\ell|\leq \xi, \quad \forall i\in \mathbfcal{V}, \quad \ell=1,\ldots,L. 
\end{equation}
During the execution of task (b) of Algorithm~\ref{mainalgo}, each agent optimizes its local objective function based on its local estimation of the predicted average power consumption in the network $\bar{P}^i_{\ell}$, i.e.,
\begin{equation}\label{equazione10}\small
\tilde{J}^i(t_k)=\frac{1}{L} \sum_{\ell=1}^{L} \bar{P}^i_{\ell} \mathrm{p}^i u^i_{\ell}=J^i(t_k)+\frac{1}{L} \sum_{\ell=1}^{L} \xi^i_{\ell} \mathrm{p}^i u^i_{\ell}.
\end{equation}
Thus, let
\begin{equation}\small
[\mathbf{s}^{i,\star},\mathbf{u}^{i,\star}]=\argmin_{\mathbf{s}^i\in \mathcal{\chi}_i(t_k)} \tilde{J}^i(t_k)=\argmin_{\mathbf{s}^i\in \mathcal{\chi}_i(t_k)} \displaystyle \frac{1}{L} \sum_{\ell=1}^L \bar{P}_{\ell}^i \mathrm{p}^i u^i_{\ell},
\end{equation}
and
$$
\tilde{J}^{i,\star}=\frac{1}{L} \sum_{\ell=1}^{L} \bar{P}^i_{\ell} \mathrm{p}^i u^{i,\star}_{\ell}.
$$
%
%
%
Let us now denote with $\gamma_i$ the computed decrement of the local objective function affected by estimation errors
\begin{equation}\label{equazione05}\small
\tilde{J}^{i,\star}-\tilde{J}^i(t_k)= -\gamma_i.
\end{equation}
We now compute a sufficient value of $\gamma_i$ which guarantees the optimization of the global objective function despite persistent estimation errors. 

An actual decremet of the local objective function after one agent executes task (b) of Algorithm \ref{mainalgo} is obtained if 

\begin{equation}\label{decrement}
J^{i,\star}(t_k)-J^i(t_k)< -\varepsilon.
\end{equation}
By rewriting  \eqref{equazione05} and substituting $\tilde{J}^i(t_k)$ by exploiting \eqref{equazione10}, it holds
\begin{equation}\label{decrement2}
J^{i,\star}(t_k)-J^i(t_k)
 = -\gamma_i+\frac{1}{L} \sum_{\ell=1}^{L} \xi^i_{\ell} \mathrm{p}_i (u^{i,\star}_{\ell}-u^i_{\ell}).
\end{equation}
%
%
%
%
%
%

Then, by putting together the inequalities in~\eqref{decrement} and in~\eqref{decrement2},  it follows that to ensure~\eqref{decrement}, it suffices that
\begin{equation}
- \gamma_i  + \frac{1}{L} \sum_{\ell=1}^{L} \xi^i_{\ell} \mathrm{p}_i (u^{i,\star}_{\ell}-u^i_{\ell}) < -\varepsilon
\end{equation}

Thus, a decrement occurs if
\begin{equation}\label{decrement3}
\gamma_i >  \frac{1}{L} \sum_{\ell=1}^{L} \xi^i_{\ell} \mathrm{p}_i (u^{i,\star}_{\ell}-u^i_{\ell}) +\varepsilon
\end{equation}

%
%
%
%
By considering an upper bound to the estimation error $\xi^i_{\ell}$ by exploiting \eqref{equazione7}, it holds $|\xi^i_{\ell}|\leq \xi$ for all $\ell=1,\ldots,L$, and for all $i\in \mathbfcal{V}$. Thus, we can rewrite \eqref{decrement3} as the next inequality
\begin{equation}  \label{equazione16}\small
\gamma_i>\frac{\xi \mathrm{p}_i}{L} \sum_{\ell=1}^{L} |u^{i,\star}_{\ell}-u^i_{\ell}|+\varepsilon.
\end{equation}
%
%
%

Therefore, if \eqref{equazione16} holds, the value of the global objective function $J^+(t_k)$ after one agent executes task (b) is
\begin{equation}\small
\begin{array}{ll}
J^{+}(t_k)&= \displaystyle n \left(\left(\sum_{j\in \mathbfcal{V}\setminus{i} } J^j(t_{k})\right)+J^{i,\star}\right)\\
\end{array}
\end{equation}
and decreases, with respect to its value before the local update, by at least 
\begin{equation}\label{equazione18}\small
\begin{array}{ll}
J^{+}(t_k)&< \displaystyle n \left(\left(\sum_{j\in \mathbfcal{V}\setminus{i} } J^j(t_{k})\right)+J^i(t_k)-\varepsilon\right)\\
& =J(t_k)-n\varepsilon
\end{array}
\end{equation}
thus proving the statement of this theorem. \hfill $\blacksquare$
\end{proof}

\section{Description of the testbed and experimental validation}\label{sect:simulation}

Here, we firstly describe the low-cost experimental test-bed
developed to validate the proposed multi-agent DSM control architecture. Then, we propose an experimental validation of the proposed method involving domestic TCL appliances such as water heaters and radiators located in private homes of a set of volunteers participating in the pilot. Furthermore, to validate the effectiveness of the proposed framework on a larger scale while keeping the experimental complexity manageable, we also decided to introduce  in the network of cooperating devices some virtual TCLs, i.e., TCLs that are numerically simulated. \\


\subsection{Description of the CoNetDomeSys experimental testbed}

A significant contribution presented in this paper is the "CoNetDomeSys" testbed, short for ``Cooperative Network of Domestic Systems'', a low cost IoT-oriented experimental demonstrator for fast prototyping and testing of DSM algorithms on large populations of domestic appliances controlled and monitored by smart power sockets. 

The testbed is designed around off-the-shelf low-cost hardware components. In particular, the core component of the testbed is the WeMo$^\circledR$ Insight Switch \cite{wemo} smart power socket. The choice of this particular SPS was dictated by availability of Open-APIs to integrate purpose-built software, thus enabling remote monitoring and control. A WeMo$^\circledR$ Insight Switch is provided with 
\begin{itemize}
    \item a power consumption sensing unit with a resolution of $1\mathrm{mW}$ and tested maximum  sampling frequency of $1\mathrm{Hz}$;
    \item a switch to remotely power ON and OFF the appliance plugged in;
    \item a microcontroller and WiFi communication capability. 
\end{itemize}

A number of SPS were delivered to voluteers in the city of Cagliari who agreed to participate to the experimental validation of the proposed architecture. In each domestic environment a Raspberry Pi Zero W \cite{raspPi0w} was installed, connected to the same WiFi LAN of the SPS. Each SPS has been connected to a domestic TCL. 

The software developed for our testbed consists in two Java\texttrademark applications: one installed in each Raspberry Pi Zero W, that implements a client-server communication and control infrastructure over the Telnet protocol for domestic appliances; and one installed in a workstation. 

The application running on the Raspberries manages the local monitoring and control of each SPS, i.e., the power absorbed by the load $p^i(t_k)$ with an associated time-stamp and the discrete state of the SPS $s^i(t_k)$. Furthermore, it forwards measurements and receives control commands from a workstation with public IP address.

The application running on  the workstation collects data from the large population of SPSs, manages a database and sends actuation commands. A Matlab interfaces has been integrated with the software for the fast prototyping of distributed coordination algorithms. To each agent is associated a real SPS, tested communication delays are below $1$ sec.

A network topology is assigned to the agents and each can only share information anonymously with its neighbors. The topology and size of the network is unknown to the agents. 

The  workstation used for the experimental validation of our approach is a Dell Precision t5810 workstation equipped with an Intel$^\circledR$ Xeon$^\circledR$ E5-1620 v3 (10M Cache, 3.50 GHz), 64GB of RAM, Windows 10 Pro. 



The processing carried out by the testbed is centralized and takes place in a single workstation. Therefore, the testbed validates experimentaly the sensing and actuation of the SPS and TCLs while to enable fast algorithm protoyping and testing in the MATLAB$^\circledR$ environment the processing and communications among agents are simulated via software.


Technical details concerning the hardware and software architecture for implementing the proposed multi-agent control architecture on SPSs can be found in a patent filed to the Italian Patent and Trademark Office \cite{patentCoNetDomesys}.

\subsection{Parameters and setting of the experimental scenario}

\begin{table}[!tbp]
    \caption{Test parameters}
    \label{tab:TCLparam}
    \centering
    \begin{tabular}{|c|c|c|c|c|}
    \toprule
      $\#$ &\textbf{TCL}   & $\alpha^i~\mathrm{[s^{-1}]}$ & $\beta^i\cdot q^i/\alpha^i~\mathrm{[-]}$ & $\mathrm{p}^i~\mathrm{[kW]}$\\
      \midrule
      $12$ & Water Heater   &  $\left(0.3\div5.3\right)\times10^{-4}$ & $6.17\div48.25$& $1.2\div1.5$\\
       \midrule
      $3$ &  Radiator  &  $\left(1.5\div3.5\right)\times10^{-3}$ & $1.6\div5.5$& $1.2,2.0$\\
       \midrule
      $85$ & Simulated TCL& $\left(1.3\div2.4\right)\times10^{-4}$ & $21\div25$& $1.0\div2.0$\\
     \bottomrule
    \end{tabular}
\end{table}

{To validate the proposed  approach on the CoNetDomeSys testbed a small scale scenario involving domestic TCL appliances located in a set of private homes of volunteers has been considered. To limit the number of volunteers required to carry out the test and execute the TCL cooperation protocol on a network of at least $100$ agents we also considered a set of numerically simulated TCLs and SPS. In particular, in the present scenario we had access to 12 electric water heaters and 3 electric radiators in $10$ different locations in the city of Cagliari, Italy. 




\begin{figure}[!ht]
\centering
\includegraphics[width=9pc]{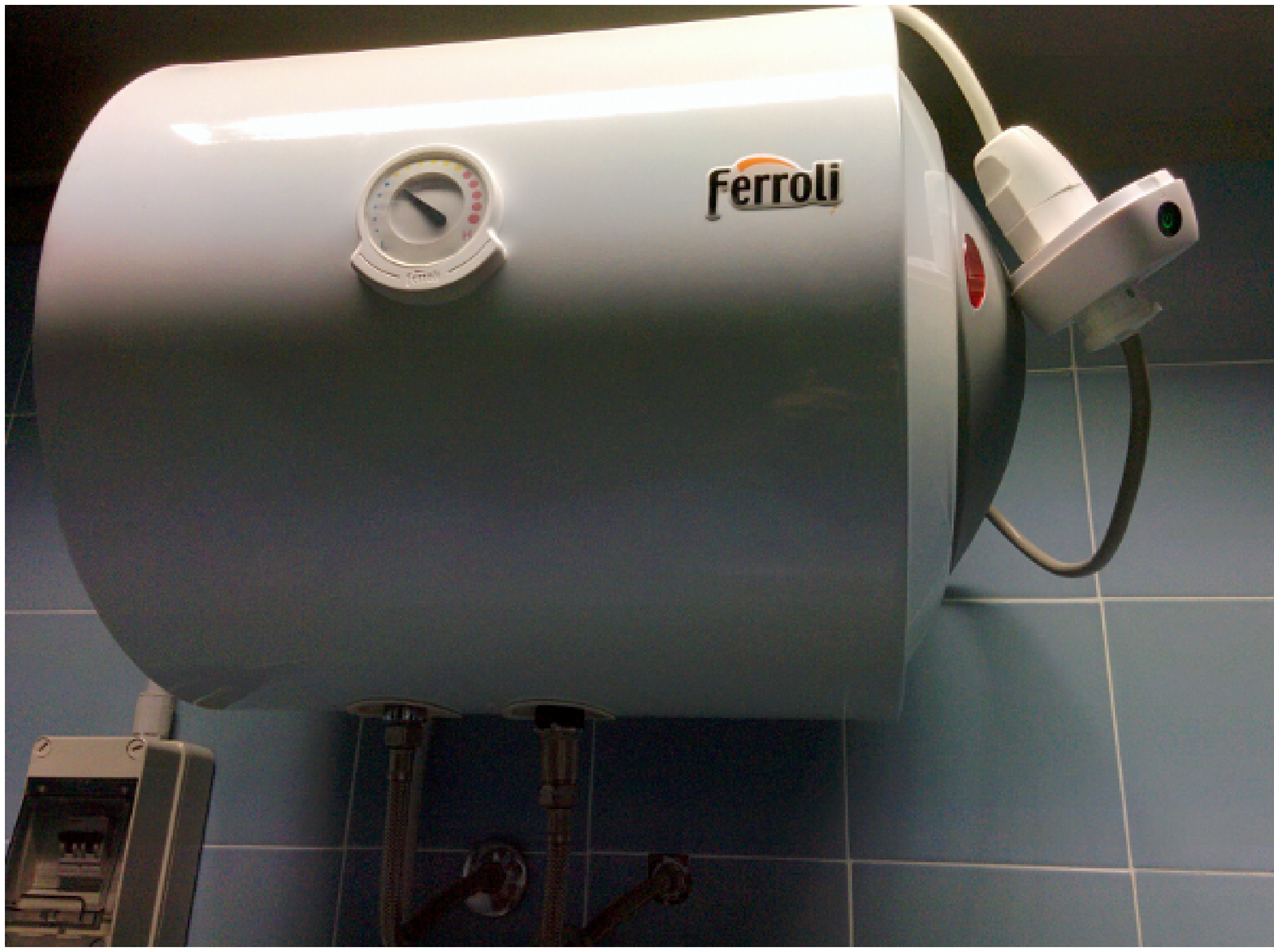}
\includegraphics[width=8.87pc]{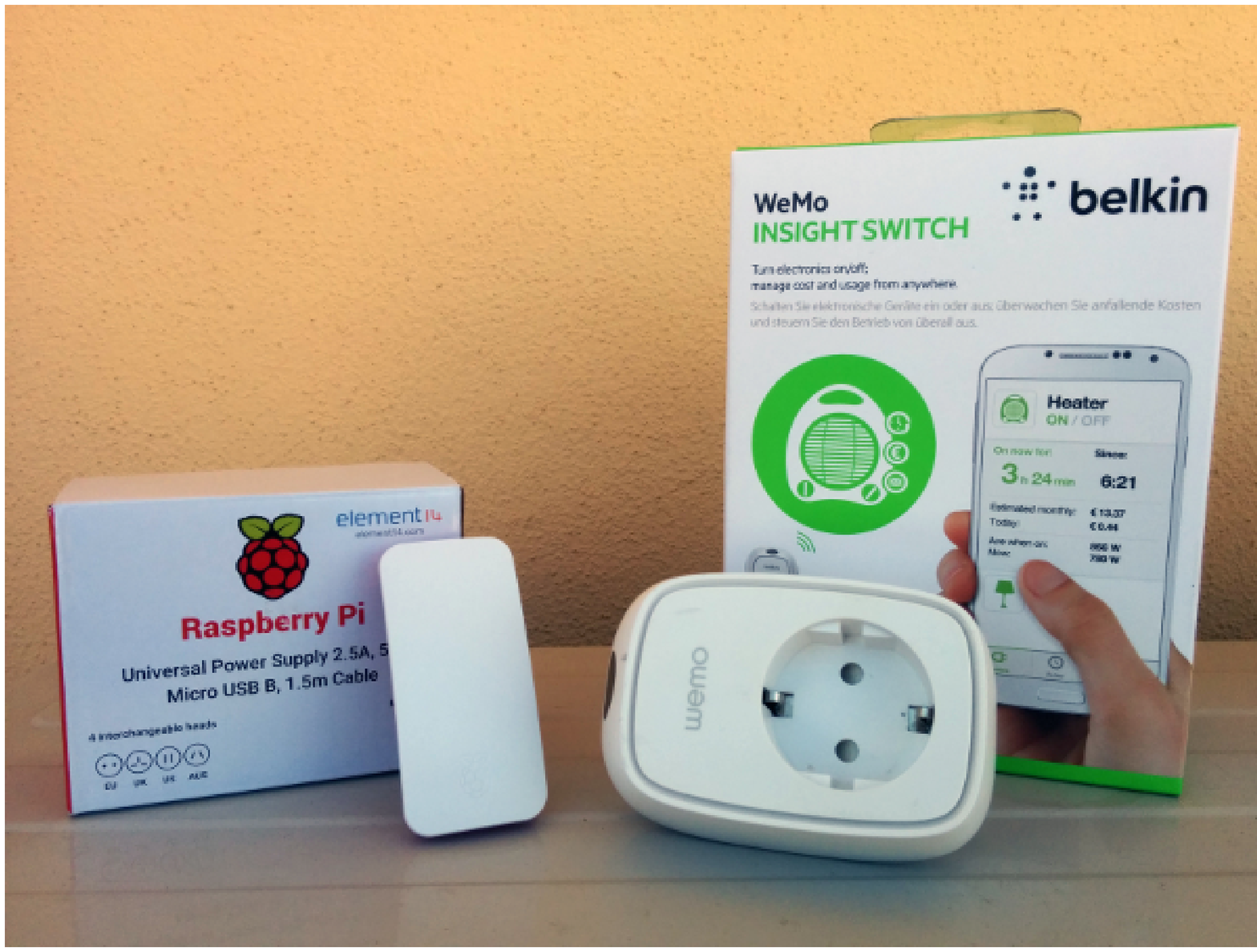}
\caption{Left: An electric water heater plugged into a smart power socket. Right: A Raspberry Pi Zero W and a WeMo$^\circledR$ Insight Switch.
}
\label{fig:testbed}
\end{figure}
\begin{figure}[!ht]
\centering
\includegraphics[width=21pc]{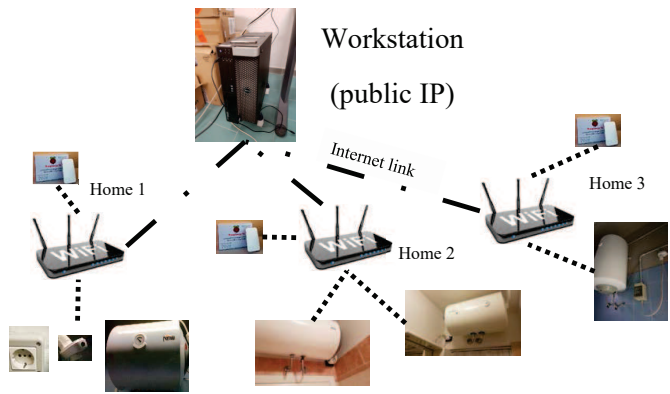}
\caption{Example of physical/communication topology of the CoNetDomeSys Testbed.}
\label{fig:testbed}
\end{figure}

We introduced $85$ numerically simulated SPS and TCLs with parameters identified from other real TCLs involved in the experiment, thus yielding a total population of \mbox{$n=100$} agents.
On the left side of Figure~\ref{fig:testbed} one of the electric water heater under test and Raspberry Pi Zero W and a WeMo$^\circledR$ Insight Switch are shown.} The parameters of each TCL were identified with the method described in subsection \ref{subsubsect:Sys_Id} and by choosing parameter ``$y^i_{\mathrm{min}}=0.5$'' for all agents.


\begin{figure}[!ht]
\centering
\includegraphics[width=21pc]{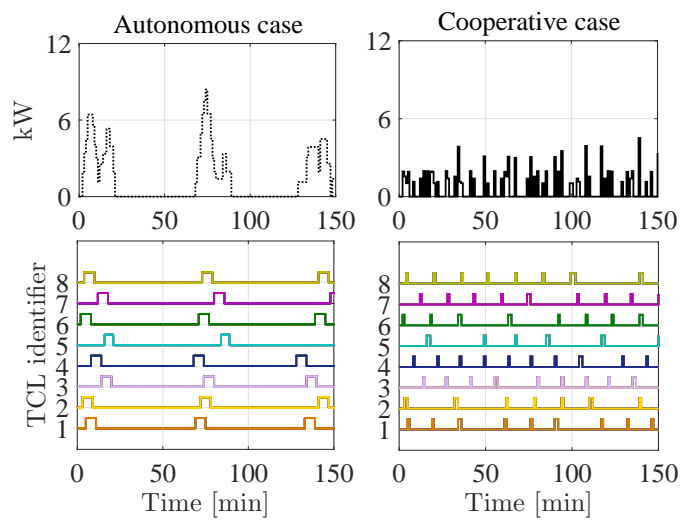}
\caption{Comparison between the power consumption of $8$ simulated TCLs in autonomous operation (autonomous case) and when executing the TCL cooperation protocol (cooperative case). The power consumption of each TCL is detailed with a different color.}
\label{fig:de-synchronization}
\end{figure}
    
In our scenario we considered a peer-to-peer network topology represented by a random undirected Erd\"os-R\'enyi graph $\G(\V,\E)$, with edge existence probability $3\log(n)/n$. Tests were carried out in real-time and in accordance with the notation in Algorithm~\ref{mainalgo} the time-related parameters were set as
\begin{itemize}
    \item Dynamic consensus (task a): $dt=10\mathrm{msec}$;
    \item Local optimization (task b): $\Delta t=1\mathrm{sec}$;
    \item SPS actuation time slots (task c):  $\Delta \tau=1\mathrm{min}$.
\end{itemize}

A receding horizon time window $\tau(t_k)$ of $40\mathrm{min}$ was chosen, thus $L=40$ time slots with length $\Delta \tau = 1 min$ each. The probability of executing the local optimization every $\Delta t$ on each device was set to $\mu^i=1/30$ for all agents. 
The expected number of local optimization rounds (see \eqref{localOptproblem}) executed by each agent in one time-slot $\Delta \tau$, is $\frac{\mu^i\Delta \tau}{\Delta t}=2$. Local optimization problems were solved by the MATLAB$^\circledR$ Mixed Integer Linear Programming solver ({\tt intlinprog}) with time limit of $5$ sec. The solver uses a time-constrained ``Branch and Bound'', in our scenario the average execution time of each optimization was about $0.03\mathrm{sec}$, exact optimal solutions were found within the time limit in the large majority of cases.

It should be noticed that, according to the update rule given at step 3 of Algorithm~\ref{mainalgo}, only solutions which improve the current local objective function by a minimum amount are exploited to update the ON/OFF scheduling of the SPS. Indeed, this ensures that the online optimization of the global objective function is carried out despite estimation errors.

In our test with $100$ agents we considered a time span of $550\mathrm{min}$ ($\approx$~9 hours). 

Numerically simulated TCLs and SPS were given random initial conditions with virtual temperature in the interval $[0.5,0.7]$. This ensured a scenario where about $\mathrm{90\%}$ of virtual TCLs would need to turn ON the heating within the first $25\mathrm{min}$ of the experimentation, thus leading to a scenario where the network of TCLs would tend to be synchronized and induce a significant peak of power consumption.


In Figure~\ref{fig:de-synchronization} the detail of the power consumption profile of $8$ simulated TCLs is shown. It can be seen that the discrete state of each TCL switches with increased frequency in the cooperative case.

\begin{figure}[!ht]
\centering
\includegraphics[width=21pc]{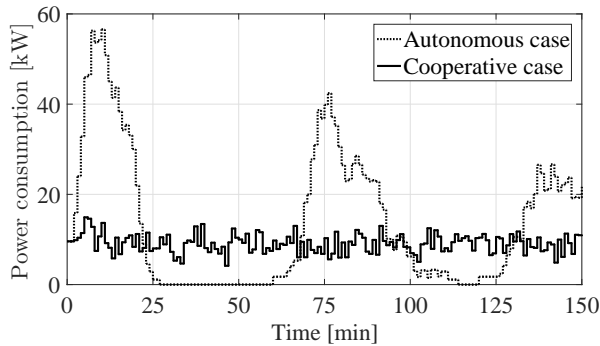}
\caption{Comparison between the power consumption of $100$ TCLs in autonomous operation (autonomous case) and when executing the TCL cooperation protocol (cooperative case) in the mixed scenario ($16$ real TCLs).}
\label{fig:de-synchronization1-mauro}
\end{figure}

\subsection{Experimental validation}\label{subsec:validation}

In Figure~\ref{fig:de-synchronization1-mauro} it is shown a comparison between the power consumption profile by the considered network of $100$ mixed TCLs ($15$ real and $85$ simulated TCLs) during autonomous operation (autonomous case) and when executing the TCL cooperation protocol (cooperative case). The numerically simulated TCLs have been set with the same initial conditions as for the autonomous case, the real TCL had different and arbitrary initial conditions. 
It is evident that the cooperation among TCLs promotes the desynchronization of their power consumption, thus reducing peak power demand.
%
%
\begin{figure}[!ht]
\includegraphics[width=21pc]{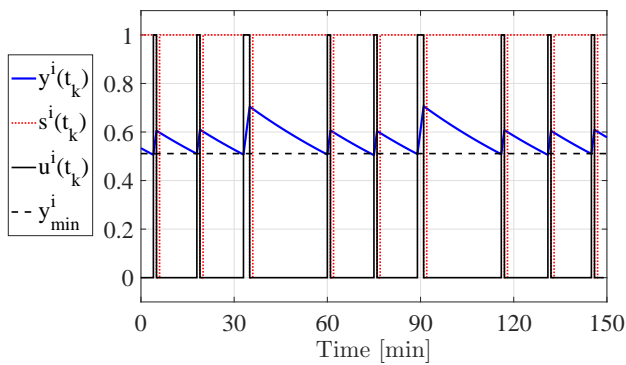}
\caption{Evolution of virtual temperature $y^i(t_k)$, SPS state $s^i(t_k)$, and TCL state $u^i(t_k)$ of one agent during the execution Algorithm~\ref{mainalgo}.}
\label{fig:virtual_TCL}
\end{figure}
In Figure~\ref{fig:virtual_TCL} it is shown a detail of the evolution of the discrete state of the SPS and TCL with the corresponding virtual temperature for one agent that is executing the TCL cooperation protocol. It can be seen that the SPS is most often in the ON state, thus there is no significant issue with excessive power switching that might otherwise shorten the lifespan of the device. Furthermore, it can be noticed that to modulate the state of the TCL while cooperating with others, the average temperature of the TCL seldom reaches the upper limit $T^i_{max}$ of the desired temperature range, as opposed to the standard thermostat control logic which reaches $T^i_{max}$ during every ON/OFF cycle, thus realizing some limited energy savings as a by product. 

\begin{figure}[!t]
\centering
\includegraphics[width=21pc]{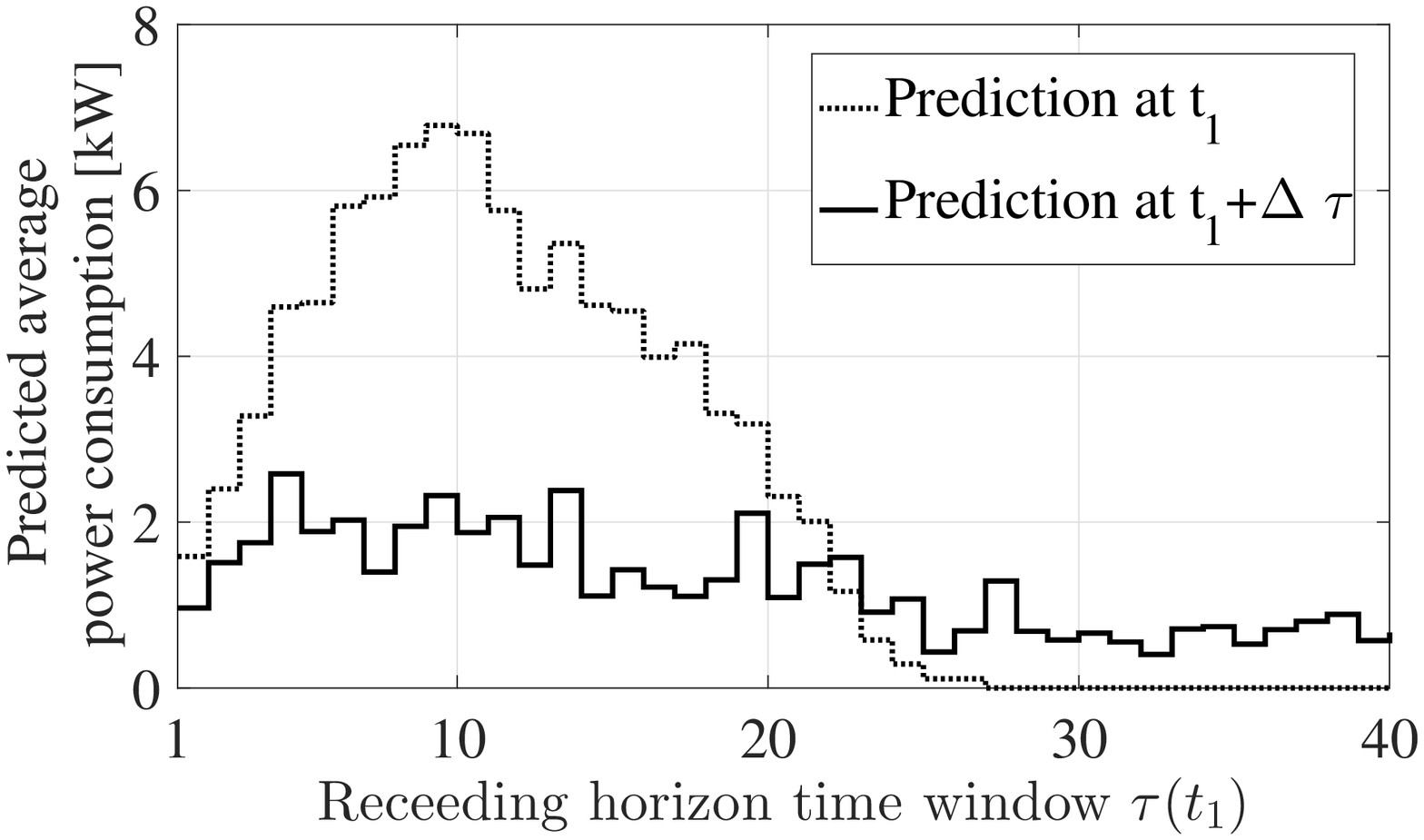}
\caption{Comparison of the predicted average power consumption  $\bar{P}^i_{\ell}$ by agent $i=1$, $\ell=1,\dots,40$, at time $t_1$ and time $t_1+\Delta \tau$.}
\label{fig:dyn_consensus}
\includegraphics[width=21pc]{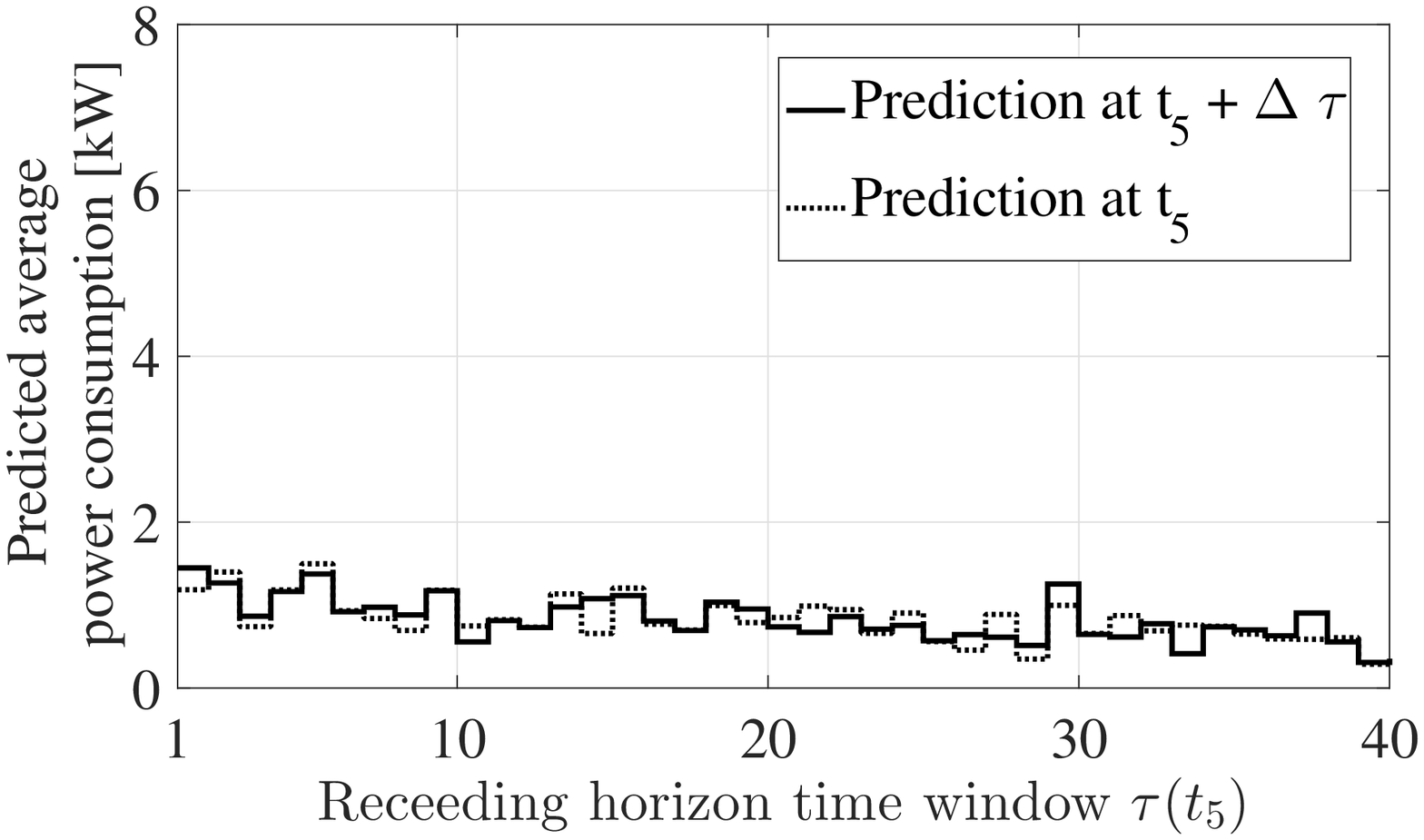}
\caption{Comparison of the predicted average power consumption  $\bar{P}^i_{\ell}(t_k)$ by agent $i=1$, $\ell=1,\dots,40$, at time $t_5$ and time $t_5+\Delta \tau$. }
\label{fig:dyn_consensus2}
\end{figure}

In Figure~\ref{fig:dyn_consensus} it is shown a comparison between the predicted average power consumption $\bar{P}^i_\ell$ in \eqref{eqn:dyn_consensus} for $\ell=1,\dots,40$ estimated with the dynamic consensus algorithm by agent $1$ at two different instants of time, evaluated at $t_1=1\mathrm{min}$ and at $t_1+\Delta \tau=2\mathrm{min}$. The same comparison is also evaluated at $t_5=5\mathrm{min}$ and at $t_5+\Delta \tau=6\mathrm{min}$ as shown in Figure~\ref{fig:dyn_consensus2}. It can be seen that major changes in the predicted power consumption by the network occur mostly during the transient behavior when the TCL cooperation protocol is initialized (Figure~\ref{fig:dyn_consensus}). After the time-window  $\tau(t_k)$ recedes by a few time-slots, the network reaches a steady state (Figure~\ref{fig:dyn_consensus2}) where only small changes occur to the predicted average power consumption and therefore small changes are triggered to the ON/OFF schedules of the SPS by local optimizations.

\begin{figure}[!h]
\centering
\includegraphics[width=21pc]{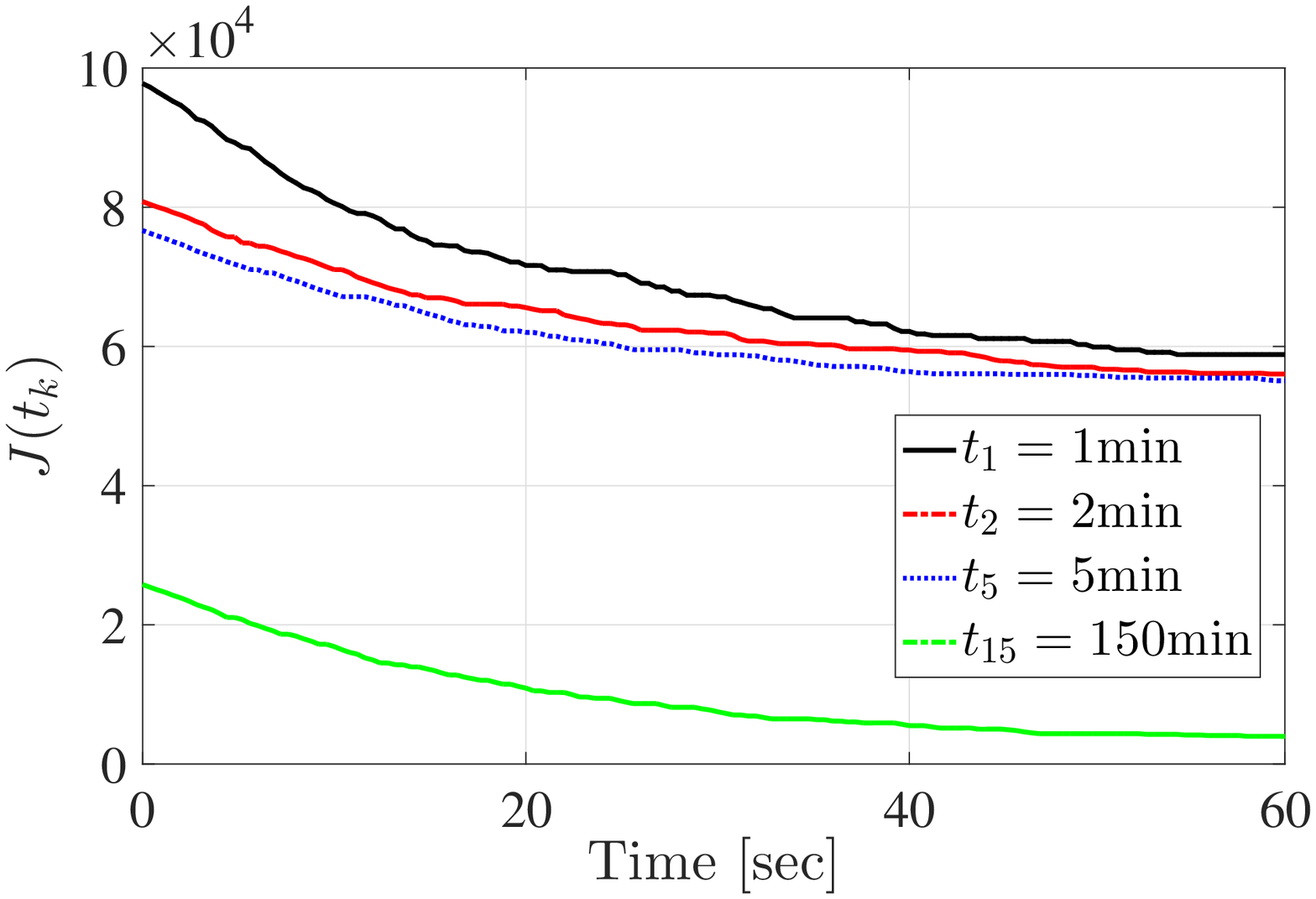}
\caption{Evolution of the global objective function during the execution of the TCL cooperation protocol at different intervals of time $t_k$.}
\label{fig:J_global_cost_decaying}
\vspace{0.5cm}
\includegraphics[width=21pc]{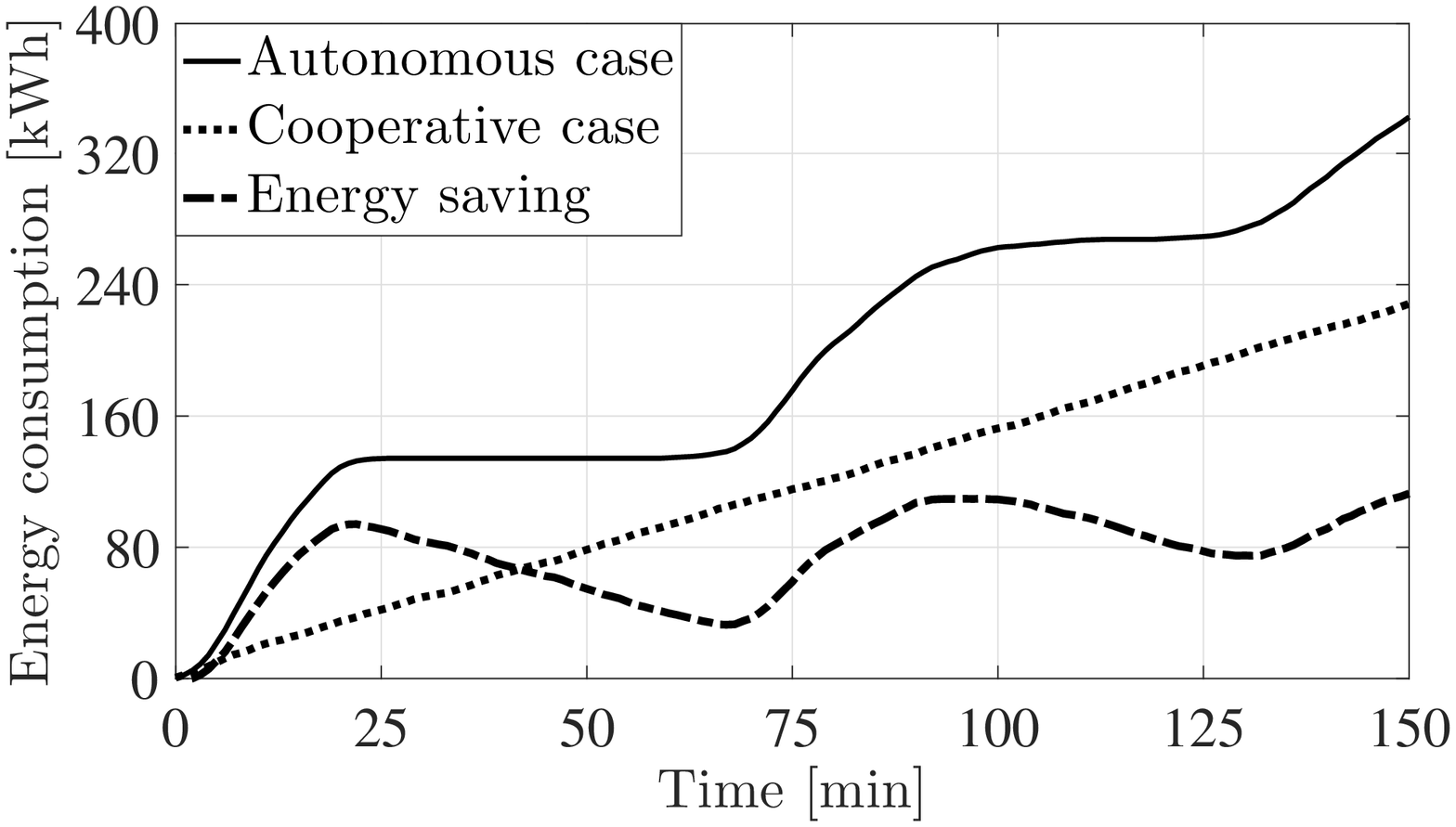}
\caption{
Comparison  between the energy consumption profiles, integrated within a $150$ min time window, with the TCLs in autonomous operation and during the execution of the TCL cooperation protocol.}
\label{fig:energy_saving}
\end{figure}


The experimental test shows that even a few iterations of task (b) the TCL cooperation protocol are sufficient to significantly improve the global objective, i.e., reduce peak demand and load variations. Since the approach is real-time and based on feedback, errors due to local TCL parameter uncertainty and estimation errors are mitigated and averaged within the whole network, thus providing robustness.

Figure~\ref{fig:J_global_cost_decaying} is a validation of the result in Theorem~\ref{theo:online_opt}, it shows the executing of the TCL cooperation protocol with randomized local optimizations, i.e., task (b), provides a decrement on the global objective $J(t_k)$ \eqref{eqn:globalObjective} despite estimation errors and real dynamical evolution of the TCLs.

Finally, in Figure~\ref{fig:energy_saving} it is shown a comparison  between the energy consumption profiles, integrated within a $150$ min time window, with the TCLs in autonomous operation and during the execution of the TCL cooperation protocol. It can be seen that as a by product of the proposed cooperation strategy, also total power consumption by the network is reduced, thus realizing energy savings. This occurs because the modulation of the ON/OFF state of the TCLs forces their temperature to be closer to the lower limit of the desired temperature range $[T^i_{min},T^i_{max}]$ thus reducing thermal losses with respect to the ambient temperature.{\hfill $\blacksquare$}

\begin{figure}[!ht]
\centering
\includegraphics[width=22pc]{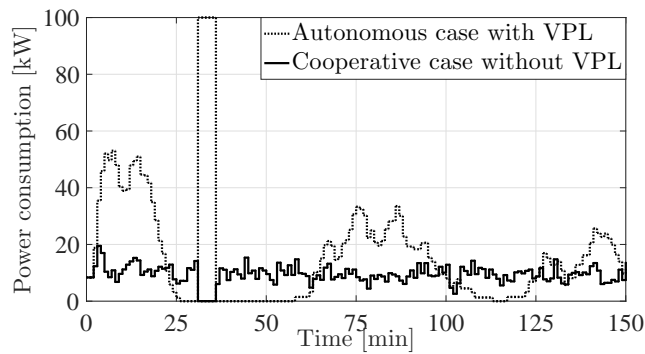}
\caption{Temporal  behaviour  of  the  actual  absorbed  power  of  the  network of TCL-plus-SPS systems without (left) and with (right) the proposed DSMscheme, in the presence of a VPL agent.}
\label{fig:vpl_test}
\end{figure}

\subsection{Experimental validation with virtual load}\label{subsec:virtualload}

Finally, we provide an experimental validation of indirect shaping of the collective power consumption profile~\eqref{eqn:Total_Power} for DSM purposes. The method consists in introducing one agent, denoted as Virtual Power Load (VPL), which executes the TCL cooperation protocol with a pre-defined power consumption profile which is fictitious. Such  pre-defined power consumption profile consists in the amout of power demad that we wish to reduce in a given interval of time. This VPL cooperates with other agents only via dynamic consensus, i.e., task $a$ of Algorithm~\ref{mainalgo}. In this experiment we consider the case where during daily peak demand in an urban environment we wish to reduce to zero the power consumption of the network of TCLs for $5$ minutes.

To do so we design the VPL power consumption profile as large
in the interval $t\in[30\div35]\mathrm{min}$, in particular equal to  $100\mathrm{kW}$, and zero otherwise. The agent representing the VPL is connected to randomly and anonymously to other agents in the network.
The result of the test is shown in Figure~\ref{fig:vpl_test}. Since the VPL power consumption is greater than that of the network, by the online optimization of the proposed global objective function the emergent behavior of the TCL cooperation protocol shifts the power consumption of the real devices away from that of the VPL. We remark that this task is achieved without direct control of the load and preserving anonimity of the actual TCLs that shift their power consumption and while preserving the operating temperature range of each device, i.e., no disservice is caused to the users as a consequence of the DSM scheme.



\section{Conclusions} \label{sec:conclusion}

In this paper we presented a multi-agent DSM control architecture for the coordination of anonymous networks of thermostatically controlled loads via smart power sockets able to measure power consumption. We proposed: i) a method for parameter identification of a power consumption model; ii) an observer for the local estimation of the internal state of each TCL; iii) an MLD model of the TCL-plus-SPS hybrid system which enables numerical optimization of control inputs and prediction of power consumption; iv) A distributed and randomized online optimization method which enables the SPSs to cooperate autonomously and anonymously to optimize a constrained global objective function to enable demand side management;  v) a low cost experimental testbed based on off-the-shelf hardware an purpouse built software which we make available to the research community; vi) An experimental validation of the proposed method and architecture.


\bibliographystyle{IEEEtran}
\bibliography{autosam}

\end{document}